\begin{document}
\twocolumn[
\aistatstitle{Simulation-Based Stacking}
\aistatsauthor{Yuling Yao$^*$   \And Bruno Régaldo-Saint Blancard$^*$  \And Justin Domke}
  \runningauthor{Yuling Yao, Bruno Régaldo-Saint Blancard,  Justin Domke}
\aistatsaddress{
{Flatiron Institute, New York$^*$}
\And  
{\hspace{3.5cm}University of Massachusetts Amherst} }
]

\newcommand{\jd}[1]{{\color{olive}[JD: #1]}}
\newcommand{\ve}[1]{{\bm{#1}}}
 
\begin{abstract}
Simulation-based inference has been popular for 
amortized Bayesian computation. 
It is typical to have more than one posterior approximation, from different inference algorithms, different architectures, or simply the randomness of initialization and stochastic gradients. 
With a consistency guarantee, 
we present a general posterior stacking framework to make use of all available approximations. Our stacking method is able to combine densities, simulation draws, confidence intervals, and moments, and address the overall precision, calibration, coverage, and bias of the posterior approximation at the same time. We illustrate our method on several benchmark simulations and a challenging cosmological inference task.   
	\end{abstract}
		\vspace{-.08in}
	
\section{Introduction}

Simulation-based inference (SBI) has been widely used in scientific computing including biology, astronomy, and cosmology \citep[e.g.,][see Appx.~\ref{app:back} for background]{cranmer2020frontier, gonccalves2020training,  dax2021real, hahn2022simbig}. Instead of an explicit likelihood function, SBI only requires a forward model that generates simulated observations given parameters. Despite its popularity, there has been a growing concern about the sampling quality of SBI: how accurate the inference is compared with the true posterior.  Simulation-based calibration \citep[SBC,][]{talts2018validating} diagnoses posterior miscalibration. Given sufficient data, it will typically reject the null hypothesis because all computations are approximations. But the goal of computation calibration is not to reject. Given some imperfect inferences, what is next? This paper develops a stacking approach to aggregate these miscalibrated SBI outcomes, such that the aggregated inference is closer to the true posterior.

Moreover, non-mixing computation is prevalent in SBI. 
For one fixed inference task, practitioners often obtain many different posterior inference results  
because of different neural network architectures and hyperparameters, because the posterior itself can be multimodal and it is hard for one inference run to traverse across all isolated modes, or because it is cheaper for modern hardware to run many short-and-crude simulation-based inferences in parallel.
For illustration, Fig.~\ref{fig:rankssimBig} visualizes the divergent computing results in a challenging cosmology problem (SimBIG, Section \ref{sec:exp}). 
After varying hyperparameters of neural posterior estimators (normalizing flows) on seemingly reasonable ranges, we obtain up to 1,000 posterior inferences. 
The rank statistics of one parameter across four inference runs display various miscalibration types, indicating biases, over- and under-confidence.  The expected log densities (the log data likelihood, or the negative loss functions) across 1,000 inferences vary by a range of 1.7 nats.

\begin{figure}[!b]
    \centering
    \vspace{-1em}
    \includegraphics[width=\hsize]{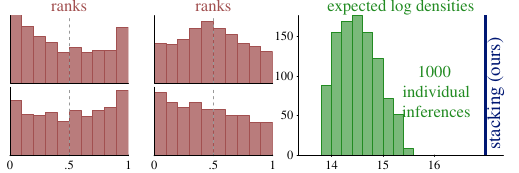}   
     \vspace{-2em}
    \caption{We run 1,000 neural posterior inferences in a challenging cosmology model. The rank histograms of one parameter reveal different types of miscalibration in four runs. The expected log densities of the 1,000  inferences vary by 1.7 nats, while stacking from this paper  improves the best approximation by 1.4 nats on holdout simulations.}
    \label{fig:rankssimBig}
\end{figure}

To handle non-mixing posterior inferences, a remedy is to pick one inference, but the selection procedure itself is noisy. Even if we correctly pick the best inference, not exploiting suboptimal inferences wastes computation, inflates the Monte Carlo variance, and reduces estimation efficiency.
Another option is to ``average over'' all inferences. But uniform weighting is generally not optimal and could be a bad idea when there are many bad inferences. Moreover, there are various ways to aggregate inferences, such as taking a linear combination of posterior densities (mixture), a combination of posterior samples, or a combination of confidence intervals.  
At the same time, posterior approximation can have various goals, such as that the approximate posterior density be close to the truth in some divergence metric, that posterior ranks should be calibrated, that the posterior mean is unbiased, or that the posterior confidence interval attains nominal coverage. If the inference is exact, all of these goals will match. But in reality, most computations are approximate, leading to tension between these  goals.



This paper develops a stacking approach to combine
multiple simulation-based inferences to improve distributional approximation. As a meta-learning procedure, instances of this framework take many individual inferences as input and output a ``stacked'' distribution to better approximate the true posterior in a given metric.  
To make the stacked distribution more flexible, we design three \emph{aggregation forms}: density mixture, sample aggregation,  and interval aggregation.  To facilitate various user-specific utility of distribution approximating, we design \emph{objective functions} on  Kullback–Leibler divergence, rank-based calibration, coverage of posterior intervals, and mean-squared error of moments. Any product of an aggregation form and an objective function renders a stacking method.  We develop five practical posterior stacking methods in Section \ref{sec:method}.  We organize them in a general  framework in Section \ref{sec:theory}, where we further prove that the stacked SBI posterior is asymptotically guaranteed to be the closest to the true posterior distribution in the assigned divergence metric.   We recommend hybrid stacking to balance different perspectives of distribution approximation. 
In Section \ref{sec:exp}, we illustrate the implementation of our methods in simulated and real-data examples, which involves a cosmology problem regarding galaxy clustering. We discuss related methods and further directions in Section \ref{sec:discuss}.

\section{\`A la carte stacking }\label{sec:method}
Throughout the paper we work with the general SBI setting where the goal is to sample from a posterior density $p(\theta|y)$ with a potentially intractable likelihood. The parameter space $\Theta$ is a subset of $\R^d$, and no assumption is needed for the data space.  
We create a  size-$N$ joint simulation table  $\{(\theta_n, y_n)\}_{n=1}^N$ by first drawing parameters $\theta_n$ from the prior distribution $p(\theta)$ and one realization of data $y_n$ from the data-forward model $p(y|\theta_n)$. 
From this simulation table, we run $K$ simulation-based inferences coming from various algorithms or architectures. Given data $y_n$, the $k$-th inference,  $k=1, \dots, K$,  returns a learned posterior density $q_{k}(\theta|y_n)$, and $S$ posterior samples $\tilde \theta_{kns}$ from $q_{k}(\theta|y_n)$, $s=1,\dots, S$.
The goal of stacking is to construct an ensemble of SBI posteriors, making use of the inferred approximation densities  $q_{k}(\cdot|y)$ or/and the sample draws $\tilde \theta$, such that the aggregated approximation is as close to the true posterior $p(\theta|y)$ as possible. See Figure \ref{fig_workflow} for an illustration.

\begin{figure}
 \vspace{-0.3em}
 \begin{tikzpicture} 
 \footnotesize
  \node (theta) at (0,0) {\color{blue}  $\theta_n$};

  \node (theta2) at (0.4,1.2) {\color{blue} $N$ samples};
  \node (theta3) at (0.4,0.85) {\color{blue} from prior};
  \node (y2) at (1,-0.38) {\color{green}  data};
  \node (y3) at (3.6,1.2) {\color{darkblue}  $K$ inferences with $S$ samples};
  \node (y4) at (6.75,0.85) {\color{red}  stacked posterior};

  \node (y) at (1,0) {\color{green} $y_n$};
  \node (theta1) at (1.9,0.8) {};
  \node (theta2) at (1.9,0.4) {};
    \node (theta32) at (1.9,0) {};
  \node (theta3) at (3.6,0.15) {\begin{tabular}{c}
${\color{darkblue} \tilde{\theta}_{1n1}, \dots, \tilde{\theta}_{1nS}} \sim q_1(\cdot | y_n) $\\
 ${\color{darkblue}\tilde{\theta}_{2n1}, \dots, \tilde{\theta}_{2nS} }\sim q_2(\cdot | y_n)$\\
$\cdots$  \\
${\color{darkblue}\tilde{\theta}_{Kn1}, \dots, \tilde{\theta}_{KnS} }\sim q_K(\cdot | y_n)$ \\
\end{tabular} };
  \node (theta4) at (1.85,-0.5) {};
  \draw[->] (theta) -- (y);
  \draw[->] (y) -- (theta1);
  \draw[->] (y) -- (theta2);
  \draw[->] (y) -- (theta32);
  \draw[->] (y) -- (theta4);
\draw [thick, decorate, decoration = {brace}] (5.4,0.8) --  (5.4,-.6);
 \node (midpoint1) at (5.4,0.1) {};
  \node (midpoint2) at (6,0.1) {};
   \draw[ double, ->] (midpoint1) -- (midpoint2);
 \node (final2) at (6.9,-0.1) {$\sim q^*(\cdot | y_n)$};
 \node (final1) at (6.9,0.26) { \color{red} $\theta^*_{n1}, \dots, \theta^*_{nS}$}; 
  \node (index) at (7.14,-0.64) {\color{blue} \tiny $n=1, \dots, N$};
  \node (N) at (6.45,-0.9) {\color{olive}   $N$:  {\tiny \#   prior simulations.}};
   \node (K) at (0.8,-0.9) {\color{olive}   $K$: {\tiny  \# of inferences.}};
\node (S) at (3.45,-0.9) {\color{olive}   $S$: {\tiny  \#  post. draws.}};

\end{tikzpicture}
 
 \vspace{-0.4em}
\caption{Stacking has two stages. The first is to sample $N$ prior simulations and run $K$ inferences methods, $q_k( \cdot|y), 1\leq k \leq  K$.  The second stage learns a stacked posterior $q^*(\cdot |y)$ to better approximate the true posterior $p(\theta|y)$.  We design stacking to facilitate different distribution combination forms and learning objectives. \vspace{-0.5em} }\label{fig_workflow}
\end{figure}
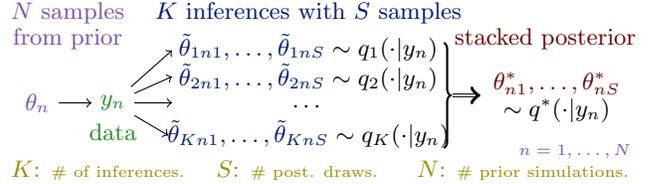

This section develops five practical posterior stacking algorithms. We defer the related theory propositions 1 to 6 and proofs in Appx.~\ref{app:theory}.

\subsection{Density mixture for  KL divergence}\label{sec:mixture_KL}
Perhaps the most natural form to combine posterior densities is to take a linear density mixture
\begin{align}
    \label{eq:simplex}
    \vspace*{-0.5em}
    q^{\mix}_{\w}(\theta|y) \coloneqq \sum_{k=1}^K w_k q_k(\theta|y),
\end{align}
where the weight $\w=(w_1, \dots, w_K)$ lies in a simplex:
$$\w \in \s_K\coloneqq\{ 0 \leq w_{k} \leq 1, ~ \sum_{k=1}^K w_k = 1\}.$$
To find the optimal weights $\w$, we seek to maximize the log predictive density of $q^{\mix}_{w}(\theta|y)$, averaged over simulations $\{\theta_1\dots, \theta_N\}$,
\begin{align}\label{eq:mixture_obj}
    \vspace*{-0.5em}
\hat \w=  \arg\max_{\w \in  \s_K}  \sum_{n=1}^N 
\log \left( \sum_{k=1}^K w_k q_k(\theta_n| y_n)  \right). 
\end{align} 
The expected log  density is connected to the Kullback–Leibler (KL) divergence. If the size of the simulation table $N$ is big enough, then 
up to a constant, the objective function in \eqref{eq:mixture_obj} divided by $N$ converges to the negative  conditional KL divergence\footnote{Standard notation \citep{cover1991elements} for conditional divergence  is  $\KL(p(\theta \vert y) \Vert q(\theta \vert y))  \coloneqq  \E_{p(y,\theta)} \log ({p(\theta \vert y)}/{q(\theta \vert y))}$, \emph{not} divergence of conditionals.}, $\KL( p(\theta|y ),  q^{\mix}_{\w}(\theta|y ) )$; see Prop.~\ref{thm:KL} in Appendix.

\paragraph{Local mixture.}\label{sec:local_mixture}
All computations are wrong, but some are useful \emph{somewhere}.
It is easy to locally adapt the weight $\w(y)$ as a function of data $y$ and output a simplex.  In practice, let $\alpha_1(y)=0$ and for $k>1$, let $\alpha_k(y)$ be the output of a neural network with its own parameters, and set $w_k(y)=\exp (\alpha_k(y))/\sum_{k=1}^K \exp (\alpha_k(y))$. The locally combined posterior is 
$
q^{\mix}_{\w}(\theta|y) \coloneqq \sum_{k=1}^K w_k(y) q_k(\theta|y). 
$ Stacking maximizes its log predictive density  average over simulations $\max_{\w: y\to \s_K} U(\w) \coloneqq \sum_{n=1}^N \log q^{\mix}_{\w}(\theta_n|y_n)$.


Despite the simplicity, there are two reasons to extend this mixture-stacking-to-max-log-density. First, the mixture has a limited degree of freedom $\w\in\s_K$, which limits the flexibility of the stacked posterior. 
Second, even in the mixture form, the log-density-based learning \eqref{eq:mixture_obj} does not make use of the existing simulation draws $\{\tilde \theta_{kns}\}$, which intuitively can offer more information.  The next subsection uses  simulation draws.



\subsection{Density mixture for rank calibration}\label{sec:cali}
\paragraph{From rank-based calibration to rank-based divergence.} 
The rank statistic gives an alternate measurement of posterior approximation quality.
For simplicity, first assume the parameter space $\Theta$ is one-dimensional. 
In the $k$-th inference, we compute the rank statistic (or the $p$-value) of the prior draw $\theta_{n}$ among its paired posterior $q_k (\theta | y_n)$, i.e.,  
\begin{equation}\label{eq:rank}
r_{kn} \coloneqq \frac{1}{S}\sum_{s=1}^S \mathds{1} \{ \tilde \theta_{kns}\leq \theta_{n}  \}.  
\end{equation}
If  the inference is calibrated, $q_k(\cdot | y)= p(\cdot | y)$, then $r_{kn}$ is uniformly distributed over the grid $\{0,1/S, \cdots, 1\}$. \citet{talts2018validating} used this fact to design a
rank-based hypothesis testing to test whether the posterior is exact.
Taking one step further, we quantify the degree of miscalibration.
Given any two conditional distributions $p(\theta|y)$ and $q(\theta|y)$, we define $D_{\mathrm{rank}} (p, q)$, a rank-based generalized divergence metric  as follows. Let  $D_0(X_1, X_2) \coloneqq \int_{0}^1 | \Pr(X_1\leq z) -\Pr(X_2 \leq  z ) |^2 dz$ be a distance between two random variables $X_1$ and $X_2$ on $[0,1]$. $D_0$ compares distributions in cumulative distribution functions (CDFs), which is of the Cramér–von Mises type \citep{cramer1928composition},  and coincides with the continuous ranked probability score \citep{matheson1976scoring}. Consider CDF transformations:  $F_{p}(\theta|y) \coloneqq \int_{-\infty}^{\theta} p(\theta^\prime|y)d\theta^\prime$ and  $F_{q}(\theta|y) \coloneqq  \int_{-\infty}^{\theta} q(\theta^\prime|y)d\theta^\prime$. When $(\theta, y)$ are distributed from $p(\theta, y)$,  $F_{p}(\theta|y) $ and $F_{q}(\theta|y) $ are two random variables on [0,1]. Then define 
\begin{equation}\label{eq_rankdiv}
D_{\mathrm{rank}}(p, q)\coloneqq  D_0\left(F_{p}(\theta|y), F_{q}(\theta|y)\right), ~~(\theta,y) \sim p. 
\end{equation}
This metric $D_{\mathrm{rank}}(p, q)$ is non-negative and its zero is attained when $q(\theta|y ) = p(\theta|y)$ almost everywhere (Prop.~\ref{thm:rank_div}) and hence a generalized divergence. 
Our defined $D_{\mathrm{rank}}(p, q)$ is appealing since it admits a straightforward empirical estimate, 
$D_{\mathrm{rank}}(p(\cdot|y), q_k(\cdot|y))\approx
\int_0^1 \left(\hat{F}_{r_k}(t) -  t\right)^2dt
$, where $\hat{F}_{r_k}(t) = \frac{1}{N}\sum_{n=1}^N \mathds 1 (r_{kn}\leq  t)$ is the empirical CDF of ranks.  
This integral is computed in a closed form (Appx.~\ref{app:practical}).
Moreover, this sample estimate is differentiable on $r_{kn}$ almost everywhere, in contrast to the familiar Kolmogorov–Smirnov test which takes the supremum norm or the Chi-squared test which requires binning.



\paragraph{Rank in the mixture stacking is linear.}
With $K$ approximate inferences,
we still study a mixture posterior ${q^{\mix}_{\w}(\theta|y_n)= \sum_{k=1}^K w_k q_k (\theta | y_n)}$ and want it to be as correct as possible under rank-based calibration.
Conveniently, the rank of $\theta_{n}$ in any $\w$-weighted mixture has an explicit expression using  individual ranks, 
\begin{equation}\label{eq:linear_rank}
\vspace{-0.5em}
r^{\mix}_{n} \coloneqq \sum_{k=1}^K w_k r_{kn}. 
\end{equation}
Let $\theta^{\mix}$ denotes a random variable with the law  ${q^{\mix}_{\w}(\cdot | y_n) =\sum_{k=1}^K w_k q_k (\theta | y_n)}$. For any fixed $\w$ and $\theta_{n}$, as $S\to \infty$, this $r^{\mix}_{n}$ is a consistent estimate of the mixture CDF, i.e., 
$r^{\mix}_{n} | \mathbf{w}, \theta_{n} \to  
\Pr ( \theta^{\mix} \leq   \theta_{n} )$;  
see Prop.~\ref{thm:rank_linear}.
The linear-additivity \eqref{eq:linear_rank} of the rank statistics can be extended to the local mixture, where the rank of the local mixture posterior is $r^{\mix}_{n} \coloneqq \sum_{k=1}^K w_k(y_n) r_{kn}$.

\paragraph{Stacking for rank calibration.} 
With the rank-based divergence $D_{\mathrm{rank}}$ and the closed form mixture rank $r^{\mix}_{n}$ in \eqref{eq:linear_rank}, 
we are now ready to run a calibration-aware stacking. We seek to minimize the rank-based divergence $D_{\mathrm{rank}}
(p(\cdot|y), q^{\mix}_{\w}(\cdot| y))$ by
\begin{align}\label{eq:rank_obj}
\hspace{-0.6em}
\min_{\w \in \s_K} \int_0^1  \left(\hat{F}_{r^{\rm mix}}(t) -  t\right)^2 dt,    
\end{align}
where $\hat{F}_{r^{\rm mix}}(t) = \frac{1}{N}\sum_{n=1}^N \mathds 1  \left(\sum_{k=1}^K w_k r_{kn} \leq  t\right)$.
The integral has a closed-form expression and hence the optimization is straightforward (Appx.~\ref{app:practical}). 



In addition to $D_0$ that matches the CDFs of the stacked ranks $r^{\mix}$ and the uniform distribution, we can also match their moments, such as to minimize the squared errors,  $(\sum_{n=1}^N \sum_{k=1}^K w_k r_{kn}/N - 1/2)^2$ and $(\sum_{n=1}^N\sum_{k=1}^K \log(w_k r_{kn})/N +1)^2.$ 
Along with \eqref{eq:rank_obj},  these rank-based stacking objectives encourage uniform ranks. As an orthogonal complement to log density stacking \eqref{eq:mixture_obj}, rank-based stacking depends on the approximate inferences $\{q_k(\cdot|y)\}$ through and only through their \emph{sample draws}, not \emph{densities}, which is especially suitable when we cannot evaluate the inferences densities such as in short MCMC and GAN.

In reality $\Theta$ is not one dimensional. Similar to the practice of SBC, either we pick a one-dimensional summary statistic $f(\theta, y)$, compute its rank  $r_{kn} \coloneqq \sum_{s=1}^S \mathds{1} \{f(\tilde \theta_{kns}, y_n) \leq  f(\theta_{n}, y_n)  \} / S$, and run one-dimensional stacking \eqref{eq:rank_obj} for targeted calibration, or we compute ranks for each dimension separately and sum up the objective function  \eqref{eq:rank_obj} on every dimension.

\subsection{Sample  stacking for discriminative calibration}\label{sec:sample}
So far we only consider density mixtures. It is also natural to work with samples directly.  For a given 
$n$ and any $s$, $(\theta_{1ns}, \theta_{2ns}, \dots, \theta_{Kns} )$ are posterior draws from $K$ inferences for the same inference task $p(\theta| y_n)$.
For example, a linear additive model  stacks $K$ approximate samples into one aggregated draw:
\begin{equation}\label{eq:sample_stack_linear}
\theta^*_{ns} = \w_0 + \mathbf{w}_1 \tilde \theta_{1ns} + \dots +  \mathbf{w}_K \tilde  \theta_{Kns},
\end{equation}
where the parameter $\mathbf{w}_0 \in \R^d, \mathbf{w}_k \in \R^{d\times d}, k\geq 1$.

We want the aggregated sample $\{\theta^*_{n1}, \cdots, \theta^*_{nS}\}$ to be a better sample approximation of the posterior $p(\theta| y_n)$. 
To measure the sampling quality in SBI, we adopt discriminative calibration \citep{yao2023discriminative}: if no classifier can distinguish between $\{(\theta_n, y_n)\}$ and  
$\{(\tilde \theta^*_{n1}, y_n), \cdots, (\tilde \theta^*_{nS}, y_n)\}$, then the stacked inference is accurate. Formally, for the $n$-th simulation run, we create $S+1$ binary classification examples. The first example is $\phi=(\theta_n, y_n)$ with label $z=1$, and the ${(s+1)\mathrm{-th}}$ example is 
$\phi=(\tilde \theta^*_{ns}, y_n)$ with label $z=0$. Looping over $1\leq n \leq N$  yields $N(S+1)$  examples $\{(\phi, z)\}$, in which $\phi$ depends on stacking weights $\w$ via $\tilde \theta$.
Denote $P(z|\phi)$ to be a probabilistic classifier that predicts label $z$ using feature $\phi$, where we reweight the classification loss function to balance two classes. Sample-based stacking solves a minimax optimization: 
\begin{equation}\label{eq_minimax}
    \hat \w =  \arg\min_{\mathbf{w}}  \max_{P} \sum_{i=1}^{N(S+1)} \log P (z_i | \phi_i).
\end{equation}
Let $q^*_{\w}(\theta|y)$ be the distribution of the stacked samples~\eqref{eq:sample_stack_linear}. 
As $N\to\infty$, this stacked $q^*_{\hat w}(\theta|y)$ minimizes the  Jensen-Shannon (JS) divergence between any $q^*_{\w}(\theta|y)$  and true posterior $p(\theta|y)$. See Prop.~\ref{thm:JS}.



\subsection{Interval stacking for conformality }\label{sec:interval}
Often the focus of Bayesian inference is to correctly quantify the uncertainty in one of a few parameters for downstream tasks such as hypothesis tests or decision theory tasks.
For simplicity, again assume a one-dimensional parameter space of interest $\Theta$ (otherwise, stack each dimension separately).
Given any $y_{n}$, in the $k$-th inference, let $l_{kn}$ and $r_{kn}$ be the left and right interval endpoint of the $(1-\alpha)$ central confidence interval in $q_k(\theta | y_n)$, which typically is computed via the $ \alpha/2$ and $ 1-\alpha/2$ sample quantiles in $\{\tilde \theta_{kns}: 1\leq s \leq S\}$. If the inference is calibrated or conformal,  the coverage probability of this interval should be at least $(1-\alpha)$ under the true posterior $p(\theta|y_n)$.

To achieve appropriate coverage,  we stack  $K$ individual posterior intervals $\{(l_{kn}, r_{kn}): k\leq K\}$ to produce an aggregated interval $(l^*_n, r^*_n)$.
We adopt a simple linear form with the stacking parameter $\w\in \R^{2K}$,
\begin{equation}\label{eq_interval_stack}
l^*_n = \sum_{k=1}^K  w_k l_{kn}, ~~
r^*_n = \sum_{k=1}^K  w_{k+K} r_{kn}.     
\end{equation}
Besides the correct coverage, we also want the posterior interval to be as narrow as possible to enhance estimation efficiency. The trade-off between coverage and efficiency has been studied in the prediction literature \citep{gneiting2007strictly}.  We design the following interval score stacking, which encourages the coverage and penalizes the length:
\begin{align}
\min_{\w}  &\sum_{i=1}^N U(r^*_n,l^*_n, \theta_n ), \notag \\
\text{where } U(r,l, \theta ) \coloneqq 
&(r - l) 
+ \frac{2}{\alpha} (l - \theta) \mathds{1} (\theta< l) \notag \\
&+ \frac{2}{\alpha} (\theta - r) \mathds{1} (\theta> r).     \label{eq:interval_score}
\end{align}
The stacked interval $(r^*_n,l^*_n)$ asymptotically provides the optimal posterior quantile estimation---As $N$ approaches  $\infty$, the unique minimizer to the 
loss function above is when the stacked interval $(r^*(y), l^*(y))$ is identical to the exact pair of the true $ \alpha/2$ and $ 1-\alpha/2$ quantiles in  $p(\theta|y)$ for almost every $y$ (Prop.~\ref{thm:interval}). 


Unlike the density mixture or sample addition, the stacked interval \eqref{eq_interval_stack} is reduced-form: we do not specify how to sample from the stacked distribution. Our interval stacking  \eqref{eq:interval_score} is a semiparametric approach in which any aspect of the posterior distribution other than the quantile is treated as a nuisance parameter.

\subsection{Moment stacking for unbiasedness/MSE}\label{sec:Moment}
Perhaps the posterior mean and covariance remain the two most important summaries of the posterior distribution. We can directly stack these summaries from $K$ approximations. 
In the $k$-th individual approximation, the sample mean of $q_k (\cdot|y_n)$ is $\mu_{kn} \coloneqq \sum_{s=1} \tilde \theta_{kns}/S$. 
In the mixture stacking,  the posterior mean of $q_{\w}^{\mix}(\cdot|y_n) = \sum_{k} w_k q_k(\cdot | y_n)$ is $\mu^*_n(\w)=\sum_{k} w_k \mu_{kn}$. For sample-based stacking \eqref{eq:sample_stack_linear}, the posterior mean is similar $\mu^{\mix}_n(\w)= \w_0 +  \sum_{k=1} \w_k \mu_{kn}$.
In either case, we can optimize stacking weights to 
match the first moment, 
\begin{equation}\label{eq_match_mean}
\min_{\w}  \sum_{n=1}^N ||\mu^*_n(\w) - \theta_n  ||^2. 
\end{equation}
Likewise, the sample covariance of the $k$-th posterior inference given $y_n$ is $V_{kn} \coloneqq  \sum_{s=1}^S ||\tilde \theta_{kns}-\mu_{kn}||^2 /S$. Using the 
law of total variation, 
the covariance of the mixture $q_{\w}^{\mix}(y_n)$ 
is $V^{\mix}_n(\w)=\sum_{k} \w_k  V_{kn} + \sum_{k} \w_k  || \mu_{kn} - \bar \mu_{n}||^2$, where  $\bar \mu_{n}= \sum_{k} w_k  \mu_{kn}$.
We design moment stacking to minimize the following negative-oriented objective function which matches the two moments at the same time:
\begin{align}
\min_{\w} &\sum_{n=1}^N U(q^{\mix}_w(\cdot |y_n), \theta_n),
\end{align}
where
\begin{align}
U(q^{\mix}_w(\cdot |y_n),\,&\theta_n) \coloneqq \log \mathrm{det} V^{\mix}_n(\w) \notag \\
&+||\mu^{\mix}_n(\w) - \theta_n||^2_{(V^{\mix}_n(\w))^{-1}} \label{eq:moment}  
\end{align} 
where $||u||^2_{\Gamma} \coloneqq u^T \Gamma u$  is the weighted norm. As $N\to \infty$, the minimal of the loss function is achieved 
if and only for almost sure $y$, the stacked mean
$\mu^{\mix}_{\w}$ 
and covariance $V^{\mix}_{\w}$  exactly matches the 
$\E(\theta|y)$ and $\mathrm{Var} (\theta|y)$ in the the true posterior (Prop.~\ref{thm:moment}).

\section{Unified posterior stacking}\label{sec:theory}

In this section, we give a unified presentation of the previous five stacking methods in Sec.~\ref{sec:method}. We observe that they principally vary along two dimensions:

\paragraph{I. What is the combination form?}
Consider $K$ conditional distributions $q_1(\cdot | y), q_2(\cdot | y), \cdots, q_K(\cdot | y)$ that have support on $\Theta$ and represent approximations of the posterior distribution given the same $y$. We define a \emph{combination form} $\Phi$ that maps  $K$ conditional distributions into one stacked conditional distribution:
\begin{equation}\label{eq:stackForm}
\Phi: ~~\{q_1(\cdot | y), q_2(\cdot | y), \cdots, q_K(\cdot | y)  \} \to  q_{\w}^*(\cdot | y).      
\end{equation}
where $\w$ is the stacking parameter. The output  $ q_{\w}^*(\cdot | y)$ should be understood as a conditional distribution, which does not necessarily require an explicit density.

\paragraph{II. What is the objective function?} 
To evaluate how well the stacked approximate inference $q_{\w}^*(\cdot | y)$ approximates the true posterior distribution, we need a utility function. We formulate this sampling valuation into a conditional prediction evaluation task. The simulation table gives paired simulations $(y, \theta)$ from their joint distribution $p(y, \theta)$, such that for any $y$, the paired $\theta$ can be viewed as an independent draw from the unknown posterior $p(\theta|y)$. The stacked inference  $q_{\w}^*(\cdot | y)$ is a  conditional distribution of $\theta$ given $y$. 
A \emph{scoring rule} \citep{gneiting2007strictly} is a bivariate function that compares  any $\Theta$-supported distribution $q(\cdot)$ and a realization $\theta$,
\begin{equation}\label{eq:scoreRule}
U: (q, \theta) \to \R, ~~\theta \in \Theta.    
\end{equation}
Table \ref{tab:listOfMethod} summarizes where our developed methods fit along the combination forms and utility functions.  The table is sparse: it is challenging to fill the remaining entries. For example, the confidence interval of the mixture or the density of sample aggregation is  intractable. We now explore general posterior stacking with an arbitrary  combination form and utility.


{

 \begin{table}
 \footnotesize
 \hspace{-0.8em}
\begin{tabular}{l| c c c c c }
  &log score &rank  &JS div.  &interval
   &moment  
  \\
     \hline
 mixture & \ref{sec:mixture_KL}  &  \ref{sec:cali}  &    & &\ref{sec:Moment} \\  
 sample   &  &   & \ref{sec:sample}  &  &\ref{sec:Moment} \\
 interval    &  &  &   & \ref{sec:interval} \\
\end{tabular}
\caption{Table of five stacking methods in relation to combination forms and utility functions.
 We have used three combination forms: (a)~density mixture, (b)~sample aggregation, and (c)~interval aggregation, and five utility functions (goals):
(i) log score \eqref{eq:mixture_obj}, 
(ii)~rank based calibration \eqref{eq_rankdiv}, (iii)~ negative Jensen-Shannon  divergence, (iv)~interval coverage  \eqref{eq:interval_score}, and (v) moment score \eqref{eq:moment}.
Adding up utility functions from one row forms a hybrid stacking.
}\label{tab:listOfMethod} 
\vspace*{-1em}
\end{table}
}

\paragraph{Learning and consistency.}
We need two conditions to produce a valid posterior stacking method. First, we need to evaluate the score of the stacked distribution, 
$U(q_{\w}^*(\cdot | y_n), \theta_n)$. Second, the 
scoring rule $U$ in \eqref{eq:scoreRule} needs to be proper, i.e., for any two $\theta$-distributions $p$ and $q$, 
\begin{equation}\label{eq:proper}
\int_{\Theta} U(q, \theta) p(\theta) d\theta \leq  \int_{\Theta} U(p, \theta) p(\theta) d\theta.
\end{equation}
These two conditions produces a stacking method. We combine $K$ posterior inferences with the form \eqref{eq:stackForm}, and fit stacking parameters $\w$ via a sample M-estimate, 
\begin{equation}\label{eq_stack}
\hspace{-0.5em}
 \hat \w = \arg \max \sum_{n=1}^N U(q_{\w}^*(\cdot | y_n), \theta_n).
\end{equation}
The expectation 
$\E_{p(y, \theta)} U(q_{\hat \w}^*(\cdot | y), \theta)$  is the average utility function of the stacked posterior.
Unlike a typical Bayesian prediction evaluation task where there are a large number of observations from one fixed data generating process, here for each fixed $y_n$ we only have one draw $\theta_n$ from the true posterior $p(\theta|y)$. 
As reassurance, the next proposition shows that stacking estimate $\hat \w$ from \eqref{eq_stack} is asymptotically optimal.
\begin{proposition}\label{thm:consi} If the score $U$ is proper, then for any $\epsilon>0$ and any given $\w^{\prime}$, as $N\to \infty$,  
$\Pr \left( \E_{p(y, \theta)} U(q_{\hat \w}^*(\cdot | y), \theta) \leq    \E_{p(y, \theta)} U(q_{\hat \w^{\prime}}^*(\cdot | y), \theta)+\epsilon \right) \to 1.$
\end{proposition}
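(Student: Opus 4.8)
The plan is to recognize \eqref{eq_stack} as a standard M-estimation problem and appeal to a uniform law of large numbers. First I would define the population objective $M(\w) \coloneqq \E_{p(y,\theta)} U(q_{\w}^*(\cdot|y),\theta)$ and the empirical objective $M_N(\w) \coloneqq \frac1N \sum_{n=1}^N U(q_{\w}^*(\cdot|y_n),\theta_n)$. Since the pairs $(y_n,\theta_n)$ are i.i.d.\ draws from $p(y,\theta)$, for each fixed $\w$ the pointwise strong law gives $M_N(\w)\to M(\w)$ almost surely. The estimator is $\hat\w = \arg\max_\w M_N(\w)$ (note that dividing the objective in \eqref{eq_stack} by $N$ does not change the argmax). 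The claim to be shown is that for any competitor $\w'$, with probability tending to one, $M(\hat\w)\geq M(\w')-\epsilon$.

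The key observation making this work \emph{without} a full consistency argument for $\hat\w$ itself is that we only need a one-sided comparison against a \emph{fixed} $\w'$. By the pointwise law of large numbers applied at $\w'$, we have $M_N(\w')\to M(\w')$ a.s., so for $N$ large, $M_N(\w') \geq M(\w') - \epsilon/2$ with probability tending to one. By optimality of $\hat\w$ in the empirical problem, $M_N(\hat\w) \geq M_N(\w') \geq M(\w') - \epsilon/2$. It then remains to transfer this back to the population objective at $\hat\w$, i.e.\ to show $M(\hat\w) \geq M_N(\hat\w) - \epsilon/2$ with high probability. This step requires a \emph{uniform} lower bound: $\sup_\w \big(M_N(\w) - M(\w)\big)$ — or at least $\sup_\w \big(M_N(\w)-M(\w)\big)_+$ — must converge to $0$. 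For this I would invoke a uniform law of large numbers over the parameter space of $\w$ (which, in each of the five instantiations, is either a compact simplex $\s_K$ or a Euclidean parameter set that can be taken compact after noting the objectives are coercive or the optimum lies in a bounded region), together with continuity of $\w \mapsto U(q_\w^*(\cdot|y),\theta)$ and a dominating integrable envelope for the family $\{U(q_\w^*(\cdot|y),\theta)\}_\w$. Chaining these two one-sided bounds gives $M(\hat\w) \geq M(\w') - \epsilon$ on an event of probability $\to 1$, which is the assertion.

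Notably, the properness hypothesis \eqref{eq:proper} is not actually needed for \emph{this} particular statement — Proposition~\ref{thm:consi} is purely an optimization-consistency claim, and properness only enters the companion propositions (Prop.~\ref{thm:KL}, \ref{thm:JS}, \ref{thm:interval}, \ref{thm:moment}) that identify the population maximizer with the true posterior in the relevant divergence. I would mention this but still carry the hypothesis along as stated.

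The main obstacle is establishing the uniform convergence $\sup_\w (M_N(\w)-M(\w))_+ \to 0$ cleanly across all five concrete cases at once, since the combination forms and scoring rules are heterogeneous: the log-score objective involves $\log(\sum_k w_k q_k)$ which is unbounded below as weights degenerate (requiring care that the relevant sup-side is still controlled, or restricting to the interior / using the concavity of the log-mixture objective), the rank objective $\int_0^1(\hat F_{r^{\mix}}(t)-t)^2dt$ is only piecewise-differentiable in $\w$ (but is bounded and can be handled by a Glivenko–Cantelli argument for the indicator class $\{r\mapsto \mathds{1}(\sum_k w_k r_k \leq t)\}$, which has finite VC dimension), and the discriminative-calibration objective \eqref{eq_minimax} is itself a minimax quantity whose value as a function of $\w$ needs its own continuity/equicontinuity control. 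A uniform treatment would either (i) state a generic regularity assumption — continuity in $\w$, compactness of the weight set, integrable envelope — under which a standard uniform LLN (e.g.\ via bracketing or a Glivenko–Cantelli class) yields the result, and then verify it case by case in the appendix, or (ii) simply prove the generic version and relegate case-specific verification to remarks. I would take route (i).
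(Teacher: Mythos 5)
Your proposal is correct and follows essentially the same route as the paper's proof: apply the law of large numbers at the fixed competitor $\w'$, use the empirical optimality $M_N(\hat\w)\geq M_N(\w')$, and transfer back to the population objective at $\hat\w$. The one substantive difference is that you are more careful than the paper on the delicate step. The paper's proof simply invokes the WLLN to claim $|M_N(\hat\w)-M(\hat\w)|\to 0$ in probability, but $\hat\w$ is data-dependent, so the pointwise WLLN does not apply there; your observation that this step requires a uniform (or at least one-sided uniform) law of large numbers over the weight space --- with compactness, continuity in $\w$, and an integrable envelope --- is exactly the gap in the paper's argument, and your proposed Glivenko--Cantelli/bracketing route is the standard way to close it. Your remark that properness of $U$ is never used in this particular clause is also accurate: the paper's own proof of Clause II is a pure M-estimation argument, and properness only matters for interpreting the population maximizer via the induced divergence $D_U$. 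Finally, note that the inequality in the proposition as printed ($M(\hat\w)\leq M(\w')+\epsilon$) is the wrong direction for an optimality claim and is not what the paper's chain of inequalities actually establishes; the statement you prove, $\Pr\left(M(\hat\w)\geq M(\w')-\epsilon\right)\to 1$, is the intended and correct one.
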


This M-estimator \eqref{eq_stack} covers all stacking procerures in this paper except for the rank stacking \eqref{eq:rank_obj}. In a companion paper \citep{yao2024stacking}, we derive more theories and prove the convergence rate and asymptotic normality of the rank-stacking estimator \eqref{eq:rank_obj} and the M-estimator \eqref{eq_stack}.

A proper scoring rule  $U$ produces a generalized divergence by defining $D_U(p, q) = \int_{\Theta} U(p, \theta) p(\theta) d\theta - \int_{\Theta} U(q, \theta) p(\theta) d\theta.$
Proposition~\ref{thm:consi} implies that the stacked approximation  $q_{\hat \w}^*(\cdot | y)$ is asymptotically optimal as its divergence $D_U$ from true posterior  $p(\cdot | y)$ is minimized among all possible combinations of the given form \eqref{eq:stackForm}.


\paragraph{Hybrid stacking.}
The five stacking methods developed in Sec.~\ref{sec:method} cannot exhaust all plausibility. In particular, given a combination operators $q_{\w}^*(\cdot | y)$, if multiple scores  $U_1, \dots, U_m$ satisfy the proper condition \eqref{eq:proper}, then the augmented score  $U_1 + \cdots  + U_m$ is still valid, and hence existing stacking methods are building blocks toward other stacking approaches.  
For example, when using the density 
mixtures, $q_{\w}^*(\cdot | y) = w_k q_k(\theta|y)$, to maximize the hybrid objective
$ 
{\sum_{n} \log  \sum_{k} w_k q_k(\theta_n| y_n)  }
-\lambda_2
{ \sum_{n} ||\sum_{k} w_k \mu_{kn}- \theta_n  ||^2 }
- \lambda_3
{  (\frac{1}{N}\sum_{n}\sum_{k} {\log(w_k r_{kn})} +1)^2}
$ 
combines needs for KL-closeness, unbiasedness, and rank calibration.

\begin{figure}
    \centering
    \vspace{-0.9em}
\includegraphics[width=\linewidth]{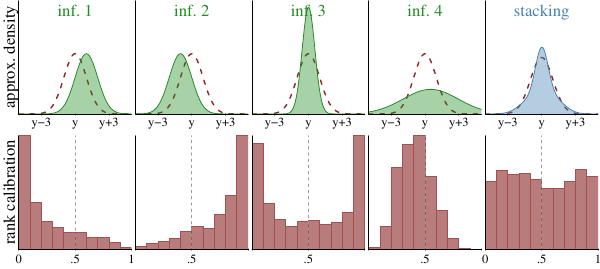}\vspace{-0.6em}
    \caption{Tension among objectives:     Four approximate inferences have the same KL divergence to the true posterior, but differ a lot in the bias, coverage, and rank calibration.}
    \label{fig:ill}
    \vspace*{-1em}
\end{figure}

\setlength{\columnsep}{10pt}%
\begin{wrapfigure}{l}{0.5\linewidth}
 \vspace{-1.7em}
    \includegraphics[width=\linewidth]{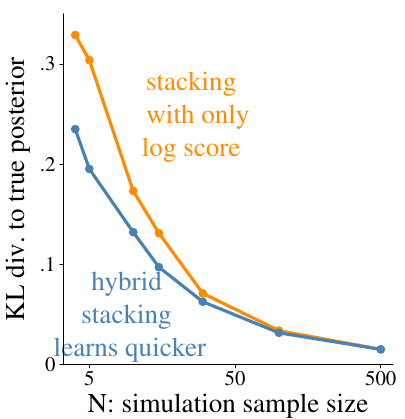}
    \vspace{-1.5em}
    \caption{The KL divergence between 
   true posterior to the stacked posterior  as training size $N$ varies.}  \label{fig:eff}
    \vspace{-1em}
\end{wrapfigure}
Probabilistic distributions on $\R^d$ are infinite dimensional objects. In contrast to all $L^p$ norms that are equivalent in a finite-dimensional Euclidean space, here these scoring rules gauge different projections of the distribution and can provide nearly orthogonal signals. For instance, suppose the true posterior is  $\theta | y \sim \n(y,1)$, the green curves in Fig.~\ref{fig:ill} represent four approximate inferences indistinguishable under the log score as they have the same KL divergence to the true posterior. However, their bias varies from 1 to -1, the real coverage of their nominal 95\% confidence varies from 73\% to 100\%,  and their rank distribution can display severe under- or over-confidence.  Hybrid stacking, shown as the blue curve, makes use of all signals and improves both density-fitting and calibration. Indeed, even when the KL divergence of the posterior is of interest, adding more information such as rank calibration into stacking objectives boosts  efficiency. Fig.~\ref{fig:eff} shows that hybrid stacking has a quicker learning rate, and its posterior inference is more accurate than the plain log score stacking \eqref{eq:mixture_obj} when the training  size $N$ is small. Details of the example are in Appx.~\ref{app:exp}.


\paragraph{General recommendations.}
\emph{Training-validation split:} To avoid overfitting in stacking, we split the simulation table $\{(\theta, y, \tilde \theta)\}$ into training and validation parts. We train individual inferences using the training data and train the stacking weights \eqref{eq_stack} on the validation data. We use extra holdout test data to evaluate the final stacked posterior quality. 
\emph{Fast optimization:} 
All objective functions we derived in Section \ref{sec:method} are (almost everywhere) smooth and straightforward to deploy any (stochastic) gradient optimization recipe. The weights in mixture stacking needs a simplex constraint, for which the multiplicative gradient optimization~\citep{zhao2023generalized} is suitable.
Appx.\ref{app:smooth} discusses smooth approximation of indicator functions.
\emph{Quasi Monte Carlo sampling:} 
When sampling from the stacked inference $\sum_k w_k q_k(\cdot | y)$, the quasi Monte Carlo strategy reduces the variance (Appx.\ref{app:sample}).


\section{Experiments}\label{sec:exp}


\begin{table*}
 \footnotesize
\centering
\begin{tabular}{ c | c c c || c c c || c c c || c c c }
\multicolumn{1}{c|}{\multirow{3}{*}{Task}} &  \multicolumn{3}{c||}{\multirow{2}{*}{Settings}} & \multicolumn{3}{c||}{\multirow{2}{*}{\parbox{3.5cm}{\centering Mixture for KL [Log~Pred. Density] $\uparrow$}}} & \multicolumn{3}{c||}{\multirow{2}{*}{\parbox{3.2cm}{\centering Interval Stacking [Coverage~Error \%]~$\downarrow$}}} & \multicolumn{3}{c}{\multirow{2}{*}{\parbox{3cm}{\centering Moments Stacking [Moments Error] $\downarrow$}}}\\
&&&&&&&&&&&& \\
\cline{2-13}
& dim$(\ve{\theta})$ & dim$(\ve{y})$ & $N$ & Best & Unif. & Stacked & Best & Unif. & Stacked  & Best & Unif. & Stacked\\
\hline
Two Moons & 2 & 2 & 10k & 2.84 & 2.07 & \textbf{2.88} & 3.25 & 5.25 & \textbf{2.75} & -1.72 & -1.41 & \textbf{-1.73} \\
SLCP & 5 & 8 & 10k & -5.60 & -6.15 & \textbf{-5.24} & 3.76 & 5.18 & \textbf{1.16} & 1.03 & 1.25 & \textbf{0.93} \\
SIR & 2 & 10 & 10k & 7.57 & 7.01 & \textbf{7.65} & 2.90 & 9.35 & \textbf{1.55} & -8.86 & -5.19 & \textbf{-8.92} \\ 
SimBIG & 14 & 3,677 & 18k & 15.6 & 16.7 & \textbf{17.0} & 6.08 & 5.85 & \textbf{4.26} & -3.64 & -3.71 & \textbf{-3.79} \\ 
\end{tabular}
\caption{Settings for each task and results for best inferred posterior, uniform ensemble, and stacked posterior. Settings: dimensions of parameters $\ve{\theta}$, data $\ve{y}$, and number of training examples $N$. Mixture for KL: log predictive density on holdout test set  (higher is better). Interval stacking: parameter-averaged coverage error for the 90\% credible intervals on holdout test set (lower is better). Moments stacking: error of posterior moments  \eqref{eq:moment} on holdout test set (lower is better).}
\label{table:exp_all}
        \vspace{-1em}
\end{table*}


\begin{figure}
\includegraphics[width=\hsize]{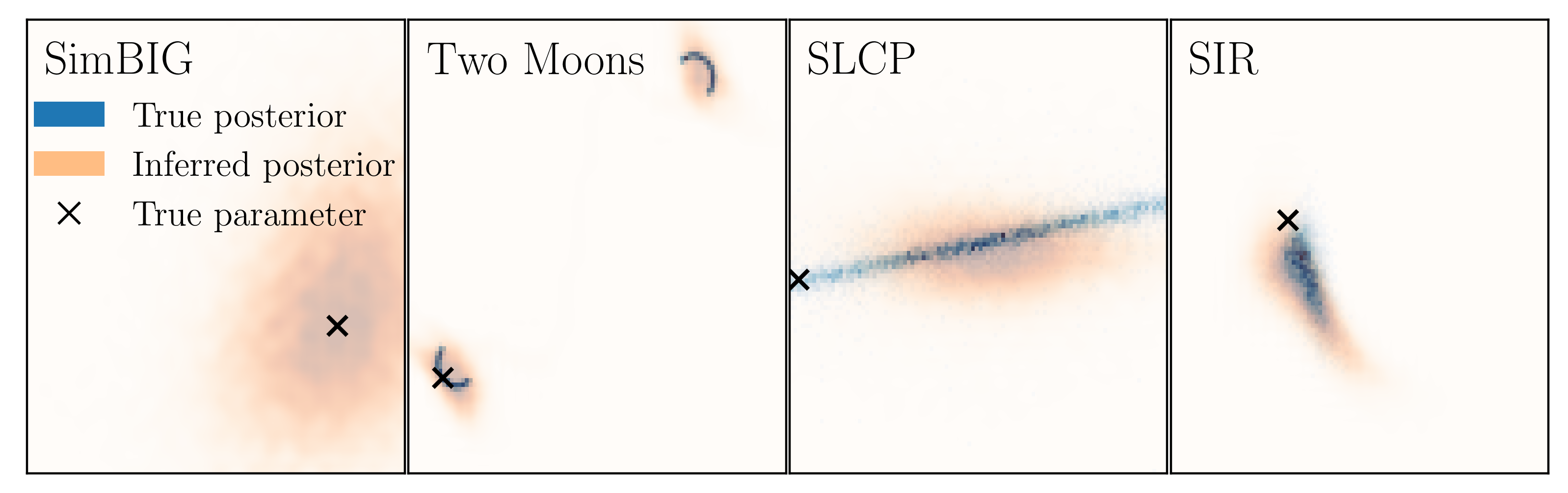}
\vspace{-1em}
\caption{Examples of true and one inferred posteriors in four models. We visualize two margins of the parameter.}\label{fig:posteriors_examples}
\vspace{-1em}
\end{figure}

We conduct stacking experiments on a variety of inference tasks taken from the SBI benchmark \citep[\texttt{sbibm},][]{lueckmann2021benchmarking} and a practical cosmology problem. These tasks are selected to showcase different computational challenges relating to the geometrical complexity of the posterior, the high dimensionality of the parameters/data, or the limited amount of training examples. Table~\ref{table:exp_all} reports the dimensions and number of training examples involved for each task, and Fig.~\ref{fig:posteriors_examples} visualizes posterior examples.

\emph{SBI benchmark}: We consider the \textbf{Two Moons} \citep{greenberg2019automatic}, the simple likelihood and complex posterior~\citep[\textbf{SLCP},][]{papamakarios2019}, and an ODE-based  \textbf{SIR} model.
\emph{Practical cosmology model:} 
We consider a cosmological inference task pertaining to the analysis of galaxy clustering: \textbf{SimBIG}~\citep{hahn2022simbig, regaldo2023}. The SimBIG model  involves 14 key physical parameters to describe the evolution of the Universe. We aim to infer from a vector of 3,677 statistical measurements derived from a large galaxy survey.

For each task, we run $K=50$ (\texttt{sbibm}) / $K = 100$ (SimBIG) independent amortized posterior inferences using the Python package \texttt{sbi}~\citep{tejero2020}. We focus on neural posterior estimators made of conditional normalizing flows. These build on  a masked autoregressive flow \citep[\textbf{MAP}, ][]{papamakarios2017}  architecture, and a multilayer perceptron (MLP) conditioner. Training consists of maximizing the log predictive density using {\sc Adam}~\citep{Kingma2015} over a fixed number of epochs for the \texttt{sbibm} tasks or using an early-stopping procedure for SimBIG: run until the validation loss stops increasing over 20 consecutive epochs. 
For each inference, we randomly select the number of MAF autoregressive layers, number of MLP hidden layers and units, MLP dropout rate, learning rate, and batch size.

In Appx.~\ref{app:exp}, we include additional experiments that runs stacking on neural spline flow \citep[\textbf{NLP}, ][]{durkan2019neural} and other SBI benchmark tasks, and we see the same conclusion holds. We give experiment details in Appx.~\ref{app:exp}. We also provide our Python/PyTorch code on \href{https://github.com/bregaldo/simulation\_based\_stacking}{GitHub} \footnote{\noindent{ {https://github.com/bregaldo/simulation\_based\_stacking.}}}.




\paragraph{Stacking reduces KL gap.} For each task with $K$ inferences, we run  mixture density stacking \eqref{eq:simplex} to 
maximize the log score~\eqref{eq:mixture_obj} trained on  a validation set of 1,000 simulations. We compare in Table~\ref{table:exp_all} the  expected log predictive density of the posterior approximation computed on a holdout set. This is the negative KL divergence from the true posterior up to a constant. We evaluate three meta posterior approximations: (a)~the best single approximation, (b)~a uniform weighting of all approximations, and (c)~stacking. Stacking has the biggest expected log predictive densities in all cases, indicating a closer posterior inference. The same conclusion holds when we run stacking on neural spline flows in the Appx.~\ref{app:exp}.

\begin{figure}
    \centering
    \includegraphics[width=0.48\hsize]{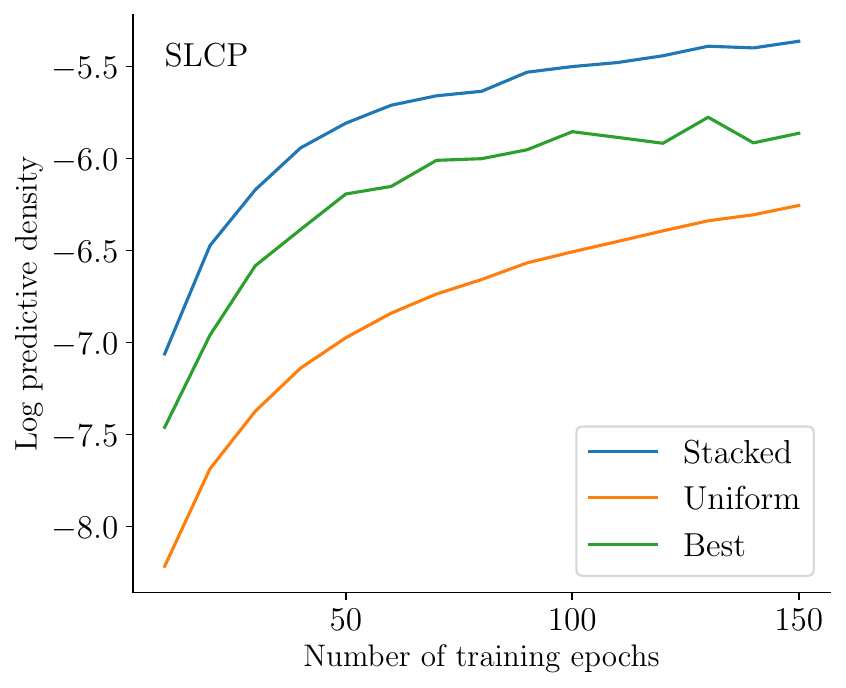}
    ~
      \includegraphics[width=0.47\hsize]{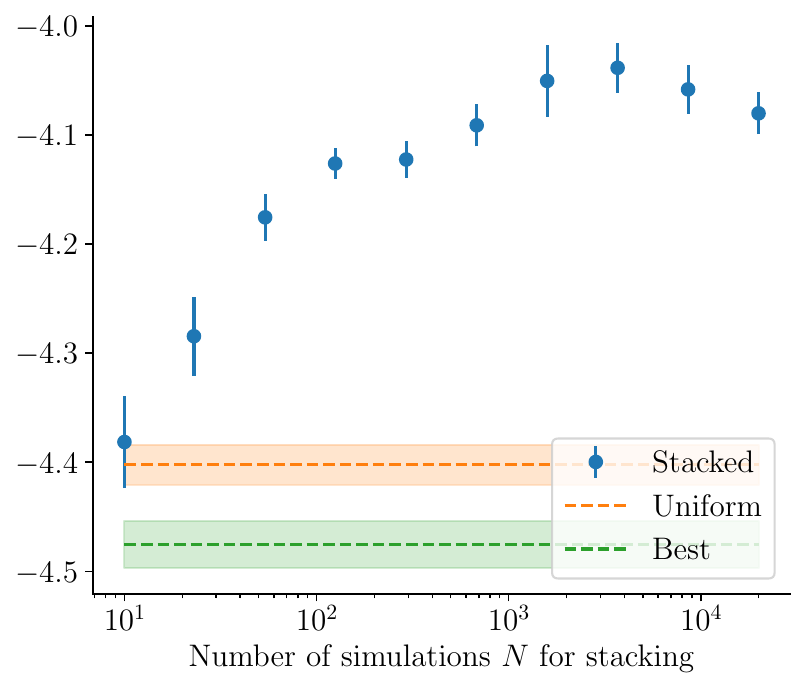}
    \vspace{-0.5em}
    \caption{Mixture for KL: (Left) expected log predictive density as a function of training epochs for the SLCP task. Stacking inferences individually trained over $\sim$~50 epochs performs better than the best single inference  trained over 150 epochs.
    (Right)   evolution of the expected log predictive density with varying stacking simulation size $N$ on the SLCP task.  }
    \label{fig:slcp_log_prob_hist}
        \vspace{-0.7em}
\end{figure}





\paragraph{Stacking performs better with less computation.} 
The stacked posterior $q_{\w}^{\mix}$ of $K$  inferences each trained for a fixed number of epochs can reach a better approximation  than the best single approximation among a series of $K$ inferences trained after a larger number of training epochs. 
In the left panel of Fig.~\ref{fig:slcp_log_prob_hist}, we show  an illustration of this phenomenon for the SLCP task. In 50 epochs, the stacked posterior already performs better than the best single approximation obtained after 150 epochs, which illustrates the interest in stacking for a limited wall time budget. The right panel of Fig.~\ref{fig:slcp_log_prob_hist}  shows how the performance of stacking changes as the  simulation size $N$ varies. We plot the hold-out-data log predictive density of the stacked posterior with neural spline flows, constructed from varying simulation size $N$. With a very small sample size, stacking can suffer from overfitting and regularization can help.

\begin{figure}
\includegraphics[width=\hsize]{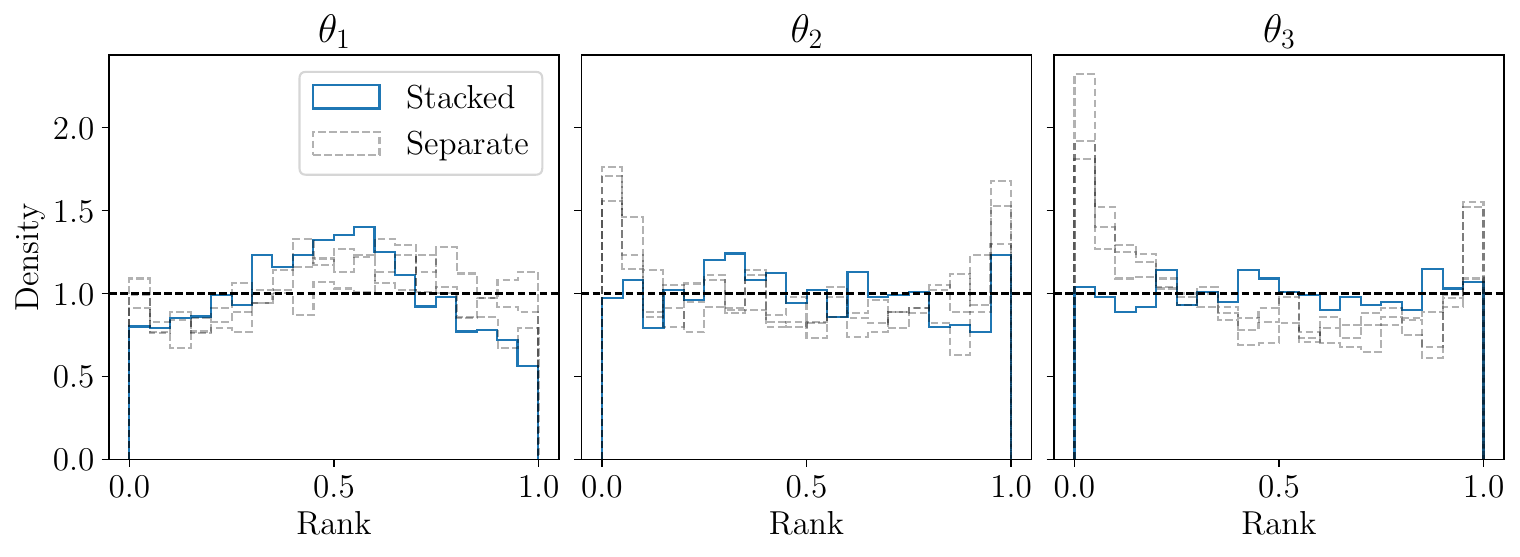}
        \vspace{-0.5em}
\caption{Ranks statistics before and after rank stacking for the SimBIG task. Stacked rank statistics are closer to the uniform distribution (black dashed lines), thus indicating better calibration than individual posteriors.}\label{fig:rank_simbig}
        \vspace{-1em}
\end{figure}

\paragraph{Stacking calibrates rank statistics.} We run stacking for rank calibration \eqref{eq:rank_obj} on the SimBIG task.  Fig.~\ref{fig:rank_simbig} shows the rank statistics obtained from the optimal stacked posterior for the first three cosmological parameters, and compares them to the same rank statistics obtained from three individual posteriors. 
We constrain the rank statistics of each dimension simultaneously. There is a clear improvement on ranks of parameters $\theta_2$ and $\theta_3$, approaching uniformity. However, for parameter $\theta_1$ the stacked rank statistics display the same kind of underconfidence patterns as individual posteriors. It happens that all individual neural posterior approximations are underconfident for this dimension $\theta_1$.  Because any mixture of underconfident approximations remains  underconfident, rank-stacking cannot help on the $\theta_1$ margin.

\paragraph{Stacking calibrates confidence intervals.} For each task, we perform interval stacking as described in Sec.~\ref{sec:interval} for all parameters simultaneously and focusing on $90\%$ central confidence intervals (i.e. ${\alpha = 0.1}$). For any scalar parameter $\theta$, we compute the coverage under the true posterior on holdout test data ${C_\alpha = \mathbb{E}_{p(y, \theta)}\left[\mathds{1}\{\theta \in I_{\alpha, q(\cdot| y)}\}\right]}$ where $I_{\alpha, q(\cdot | y)}$ is the central confidence interval in  approximation $q(\cdot | y)$. If the approximation is perfect, then $C_\alpha$ should be $1 - \alpha$. Table~\ref{table:exp_all} reports coverage error $\lvert C_\alpha - (1-\alpha) \rvert$, averaged over all parameters, for best single approximation, uniform ensemble, and stacking. In every task, interval stacking clearly improves coverage.

\paragraph{Stacking calibrates moments.} We run moment stacking to calibrate the posterior means and variances by optimizing the moments objective \eqref{eq:moment}. We compare in Table~\ref{table:exp_all} the expected error \eqref{eq:moment} of posterior means and variances of best single approximation, uniform ensemble, and  moment stacking. All errors are computed on holdout test set. Our moment stacking methods outperforms for all tasks.


\section{Discussions}\label{sec:discuss}
\textbf{Stacking/model averaging.}
Stacking \citep{wolpert1992stacked,  breiman1996stacked} is an old idea to combine learning algorithms. Classic stacking combines point predictions only. Recently stacking is advocated to combine Bayesian outcome-predictive-distributions \citep{clyde2013bayesian,  le2017bayes, yao2018using}, since it is more robust against model misspecification than Bayesian model averaging \citep[e.g.,][]{hoeting1999bayesian}.  Traditional stacking aims at the prediction of data $y$ and uses the exchangeability therein. Suppose $\{y_{i}\}_{i=1}^N$ are IID observations, and  $p_{k}(y_i)$ is the predictive density of $y_i$ in the $k$-th model, which is only available in likelihood-tractable settings, then stacking seeks to maximize 
$\sum_{i=1}^n (\log \sum_{k=1}^K w_k p_{k}(y_i)),
$  
so that the combined predictive density is close to the true data-generating process. In contrast, our paper aims at Bayesian computation and uses the exchangeability of amortized simulations. We do not need a tractable likelihood, nor any structure of data $y$.
Although scoring rules is not a new idea to Bayesian model \emph{evaluation}  \citep{gneiting2007strictly, vehtari2012survey}, as far as we know, traditional Bayesian stacking is not beyond the log scores and mixture until this paper, while the present paper introduces new combination forms and distributional scoring rules to \emph{learn} stacking weights. Our stacking approach is useful not only to combine different posterior approximations in Bayesian computation, but also to combine different probabilities models to fit observations. In a companion paper \citep{yao2024stacking}, we establish a general stacking framework to combine probabilistic predictive distributions and  provide more theory discussions.  

\paragraph{Meta-learning for multi-run Bayesian computation.}
Modern hardware have attracted the development of parallel Bayesian inferences. One strategy is to tailor MCMC tuning criterion for  parallel runs to boost mixing \citep{hoffman2021adaptive}. Another strategy is to run inference methods on subsamples of the dataset and combine the subsampled posteriors to be an unnormalized product $\prod_k q_k (\theta|y)$ \citep[e.g.,][]{nemeth2018merging, mesquita2020embarrassingly, agrawal2021amortized}. More generally, it is appealing to run many shorter, and potentially biased inferences and combine them. In this direction, the most related approach to our paper is to use stacking in non-mixing Bayesian computations \citep{yao2022stacking}. Despite a similar title, the existing stacking-for-computation approach aims to improve how good the statistical model predicts future outcomes. It mixes posterior predictive distributions to optimize the data score, $\E \log p(\tilde y) =\int_{\Theta} p(\tilde y |\theta) q(\theta | y) d\theta$, which is only tractable with a known likelihood, and arguably less relevant to scientific computing where parameters have physical meanings. Our paper has a fundamentally different goal on the inference accuracy. 

\paragraph{Simulation-based inference and calibration.}
Many individual objective functions of our stacking have been used as part of simulation-based inference or calibration. Maximizing the stacked log predictive density shares the same goal of minimizing $\KL(p, q)$ as in the traditional neural posteriors. 
Simulation-based calibration \citep{cook2006validation, talts2018validating, modrak2023simulation} has examined the marginal rank statistics for testing, while we use it for training. 
Under the repeated prior sampling
and correct computation,  Bayesian models are calibrated \citep{dawid1982well}. Our computation calibration should not be confused with the frequentist calibration \citep[repeated data sampling under one true parameter,][]{masserano2022simulation}.
The sample-based stacking is relevant to the discriminative calibration  \citep{yao2023discriminative} and the adversarial training \citep{ramesh2022gatsbi}. The moment matching shares a similar objective with the moment calibration \citep{yu2021assessment} and moment network \citep{jeffrey2020solving}, while our new objective combines two moments.
Our paper differs from these existing tools in that we combine multiple inferences.  



 \paragraph{Limitations and future directions}
This paper develops a stacking strategy to combine multiple simulation-based inferences for the same task. We design stacking to incorporate various combination forms and utility functions for distributional approximation.

Our stacking utility  function is averaged over $y$, which computes the averaged approximation quality. We have discussed the possibility of local weights, but more evaluation is needed in the future.

Including stacking in the inference pipeline provides double robustness. If individual inferences are accurate enough, there is no need for stacking; If the posterior stacking model is flexible enough, individual inferences can be arbitrarily off.  
In the experiments we tested the individual inferences are well-constructed, while the stacking model is relatively simple with a relatively negotiable computation cost. Looking forward, with advances in multiple-data processors such as GPUs, it is faster to run a large number of crude approximations in parallel than to optimize one single run to full precision, making it appealing to use a comprehensive stacking model to combine many cheaper inferences, which we leave for future work.

\subsubsection*{Acknowledgements}
The authors thank Andrew Gelman and Andreas Buja for helpful discussions.

\small 
\bibliographystyle{apalike}
\bibliography{main}	

\begin{thebibliography}{}

\bibitem[Agrawal and Domke, 2021]{agrawal2021amortized}
Agrawal, A. and Domke, J. (2021).
\newblock Amortized variational inference for simple hierarchical models.
\newblock {\em Advances in Neural Information Processing Systems}, 34.

\bibitem[Breiman, 1996]{breiman1996stacked}
Breiman, L. (1996).
\newblock Stacked regressions.
\newblock {\em Machine Learning}, 24:49--64.

\bibitem[Clyde and Iversen, 2013]{clyde2013bayesian}
Clyde, M. and Iversen, E.~S. (2013).
\newblock {B}ayesian model averaging in the {M}-open framework.
\newblock In {\em {B}ayesian Theory and Applications}, pages 483--498. Oxford University Press.

\bibitem[Cook et~al., 2006]{cook2006validation}
Cook, S.~R., Gelman, A., and Rubin, D.~B. (2006).
\newblock Validation of software for {B}ayesian models using posterior quantiles.
\newblock {\em Journal of Computational and Graphical Statistics}, 15(3):675--692.

\bibitem[Cover and Thomas, 1991]{cover1991elements}
Cover, T.~M. and Thomas, J.~A. (1991).
\newblock {\em Elements of Information Theory}.
\newblock John Wiley \& Sons, 2rd edition.

\bibitem[Cram{\'e}r, 1928]{cramer1928composition}
Cram{\'e}r, H. (1928).
\newblock On the composition of elementary errors.
\newblock {\em Scandinavian Actuarial Journal}, 1:13--74.

\bibitem[Cranmer et~al., 2020]{cranmer2020frontier}
Cranmer, K., Brehmer, J., and Louppe, G. (2020).
\newblock The frontier of simulation-based inference.
\newblock {\em Proceedings of National Academy of Sciences}, 117(48):30055--30062.

\bibitem[Dawid, 1982]{dawid1982well}
Dawid, A.~P. (1982).
\newblock The well-calibrated {B}ayesian.
\newblock {\em Journal of American Statistical Association}, 77(379):605--610.

\bibitem[Dax et~al., 2021]{dax2021real}
Dax, M., Green, S.~R., Gair, J., Macke, J.~H., Buonanno, A., and Sch{\"o}lkopf, B. (2021).
\newblock Real-time gravitational wave science with neural posterior estimation.
\newblock {\em Physical Review Letters}, 127(24):241103.

\bibitem[Durkan et~al., 2019]{durkan2019neural}
Durkan, C., Bekasov, A., Murray, I., and Papamakarios, G. (2019).
\newblock Neural spline flows.
\newblock {\em Advances in Neural Information Processing Systems}, 32.

\bibitem[Gneiting and Raftery, 2007]{gneiting2007strictly}
Gneiting, T. and Raftery, A.~E. (2007).
\newblock Strictly proper scoring rules, prediction, and estimation.
\newblock {\em Journal of American Statistical Association}, 102(477):359--378.

\bibitem[Gon{\c{c}}alves et~al., 2020]{gonccalves2020training}
Gon{\c{c}}alves, P.~J., Lueckmann, J.-M., Deistler, M., Nonnenmacher, M., {\"O}cal, K., Bassetto, G., Chintaluri, C., Podlaski, W.~F., Haddad, S.~A., and Vogels, T.~P. (2020).
\newblock Training deep neural density estimators to identify mechanistic models of neural dynamics.
\newblock {\em Elife}, 9:e56261.

\bibitem[Greenberg et~al., 2019]{greenberg2019automatic}
Greenberg, D., Nonnenmacher, M., and Macke, J. (2019).
\newblock Automatic posterior transformation for likelihood-free inference.
\newblock In {\em International Conference on Machine Learning}.

\bibitem[Hahn et~al., 2023]{hahn2022simbig}
Hahn, C., Eickenberg, M., Ho, S., Hou, J., Lemos, P., Massara, E., Modi, C., {Moradinezhad Dizgah}, A., {Régaldo-Saint Blancard}, B., and Abidi, M.~M. (2023).
\newblock A forward modeling approach to analyzing galaxy clustering with {\sc simbig}.
\newblock {\em Proceedings of National Academy of Sciences}, 120(42):e2218810120.

\bibitem[Hoeting et~al., 1999]{hoeting1999bayesian}
Hoeting, J.~A., Madigan, D., Raftery, A.~E., and Volinsky, C.~T. (1999).
\newblock Bayesian model averaging: a tutorial.
\newblock {\em Statistical Science}, 14(4):382--417.

\bibitem[Hoffman et~al., 2021]{hoffman2021adaptive}
Hoffman, M., Radul, A., and Sountsov, P. (2021).
\newblock An adaptive-{MCMC} scheme for setting trajectory lengths in {H}amiltonian {M}onte {C}arlo.
\newblock In {\em International Conference on Artificial Intelligence and Statistics}.

\bibitem[Jeffrey and Wandelt, 2020]{jeffrey2020solving}
Jeffrey, N. and Wandelt, B.~D. (2020).
\newblock Solving high-dimensional parameter inference: marginal posterior densities \& moment networks.
\newblock In {\em NeurIPS Workshop on Machine Learning and the Physical Sciences}.

\bibitem[Kingma and Ba, 2015]{Kingma2015}
Kingma, D.~P. and Ba, J. (2015).
\newblock Adam: {A} method for stochastic optimization.
\newblock In {\em International Conference on Learning Representations}.

\bibitem[Le and Clarke, 2017]{le2017bayes}
Le, T. and Clarke, B. (2017).
\newblock A {B}ayes interpretation of stacking for $\mathcal{M}$-complete and $\mathcal{M}$-open settings.
\newblock {\em Bayesian Analysis}, 12:807--829.

\bibitem[Lueckmann et~al., 2021]{lueckmann2021benchmarking}
Lueckmann, J.-M., Boelts, J., Greenberg, D., Goncalves, P., and Macke, J. (2021).
\newblock Benchmarking simulation-based inference.
\newblock In {\em International Conference on Artificial Intelligence and Statistics}.

\bibitem[Masserano et~al., 2023]{masserano2022simulation}
Masserano, L., Dorigo, T., Izbicki, R., Kuusela, M., and Lee, A.~B. (2023).
\newblock Simulation-based inference with waldo: {P}erfectly calibrated confidence regions using any prediction or posterior estimation algorithm.
\newblock In {\em International Conference on Artificial Intelligence and Statistics}.

\bibitem[Matheson and Winkler, 1976]{matheson1976scoring}
Matheson, J.~E. and Winkler, R.~L. (1976).
\newblock Scoring rules for continuous probability distributions.
\newblock {\em Management Science}, 22(10):1087--1096.

\bibitem[Mesquita et~al., 2020]{mesquita2020embarrassingly}
Mesquita, D., Blomstedt, P., and Kaski, S. (2020).
\newblock Embarrassingly parallel {MCMC} using deep invertible transformations.
\newblock In {\em Uncertainty in Artificial Intelligence}, pages 1244--1252.

\bibitem[Modr{\'a}k et~al., 2023]{modrak2023simulation}
Modr{\'a}k, M., Moon, A.~H., Kim, S., B{\"u}rkner, P., Huurre, N., Faltejskov{\'a}, K., Gelman, A., and Vehtari, A. (2023).
\newblock Simulation-based calibration checking for {B}ayesian computation: The choice of test quantities shapes sensitivity.
\newblock {\em Bayesian Analysis}, 1(1):1--28.

\bibitem[Nemeth and Sherlock, 2018]{nemeth2018merging}
Nemeth, C. and Sherlock, C. (2018).
\newblock Merging {MCMC} subposteriors through gaussian-process approximations.
\newblock {\em Bayesian Analysis}.

\bibitem[Papamakarios et~al., 2017]{papamakarios2017}
Papamakarios, G., Pavlakou, T., and Murray, I. (2017).
\newblock Masked autoregressive flow for density estimation.
\newblock In {\em Advances in Neural Information Processing Systems}.

\bibitem[Papamakarios et~al., 2019]{papamakarios2019}
Papamakarios, G., Sterratt, D., and Murray, I. (2019).
\newblock Sequential neural likelihood: Fast likelihood-free inference with autoregressive flows.
\newblock In {\em International Conference on Artificial Intelligence and Statistics}.

\bibitem[Ramesh et~al., 2022]{ramesh2022gatsbi}
Ramesh, P., Lueckmann, J.-M., Boelts, J., Tejero-Cantero, {\'A}., Greenberg, D.~S., Gon{\c{c}}alves, P.~J., and Macke, J.~H. (2022).
\newblock {GATSBI}: {G}enerative adversarial training for simulation-based inference.
\newblock In {\em International Conference on Learning Representations}.

\bibitem[{R{\'e}galdo-Saint Blancard} et~al., 2023]{regaldo2023}
{R{\'e}galdo-Saint Blancard}, B., {Hahn}, C., {Ho}, S., {Hou}, J., {Lemos}, P., {Massara}, E., {Modi}, C., {Moradinezhad Dizgah}, A., {Parker}, L., {Yao}, Y., and {Eickenberg}, M. (2023).
\newblock {SimBIG}: {G}alaxy clustering analysis with the wavelet scattering transform.
\newblock {\em arXiv:2310.15250}.

\bibitem[Talts et~al., 2018]{talts2018validating}
Talts, S., Betancourt, M., Simpson, D., Vehtari, A., and Gelman, A. (2018).
\newblock Validating {B}ayesian inference algorithms with simulation-based calibration.
\newblock {\em arXiv:1804.06788}.

\bibitem[Tejero-Cantero et~al., 2020]{tejero2020}
Tejero-Cantero, A., Boelts, J., Deistler, M., Lueckmann, J.-M., Durkan, C., Gonçalves, P.~J., Greenberg, D.~S., and Macke, J.~H. (2020).
\newblock {SBI}: A toolkit for simulation-based inference.
\newblock {\em Journal of Open Source Software}, 5(52):2505.

\bibitem[{van der Vaart}, 1998]{van2000asymptotic}
{van der Vaart}, A.~W. (1998).
\newblock {\em Asymptotic Statistics}.
\newblock Cambridge University Press, Cambridge.

\bibitem[Vehtari and Ojanen, 2012]{vehtari2012survey}
Vehtari, A. and Ojanen, J. (2012).
\newblock A survey of bayesian predictive methods for model assessment, selection and comparison.
\newblock {\em Statistical Survey}.

\bibitem[Wolpert, 1992]{wolpert1992stacked}
Wolpert, D.~H. (1992).
\newblock Stacked generalization.
\newblock {\em Neural Networks}, 5:241--259.

\bibitem[Yao and Domke, 2023]{yao2023discriminative}
Yao, Y. and Domke, J. (2023).
\newblock Discriminative calibration.
\newblock {\em arXiv:2305.14593}.

\bibitem[Yao et~al., 2024]{yao2024stacking}
Yao, Y., Ouyang, J., and Buja, A. (2024).
\newblock Stacking as a way of life: {A} general framework for combining predictive distributions.
\newblock Technical report.

\bibitem[Yao et~al., 2022]{yao2022stacking}
Yao, Y., Vehtari, A., and Gelman, A. (2022).
\newblock Stacking for non-mixing {B}ayesian computations: {T}he curse and blessing of multimodal posteriors.
\newblock {\em Journal of Machine Learning Research}, 23(1):3426--3471.

\bibitem[Yao et~al., 2018]{yao2018using}
Yao, Y., Vehtari, A., Simpson, D., and Gelman, A. (2018).
\newblock Using stacking to average {B}ayesian predictive distributions (with discussion).
\newblock {\em Bayesian Analysis}, 13:917--1007.

\bibitem[Yu et~al., 2021]{yu2021assessment}
Yu, X., Nott, D.~J., Tran, M.-N., and Klein, N. (2021).
\newblock Assessment and adjustment of approximate inference algorithms using the law of total variance.
\newblock {\em Journal of Computational and Graphical Statistics}, 30(4):977--990.

\bibitem[Zhao, 2023]{zhao2023generalized}
Zhao, R. (2023).
\newblock The generalized multiplicative gradient method and its convergence rate analysis.
\newblock {\em arXiv: 2207.13198}.

\end{thebibliography}
\normalsize

\newpage
\appendix
\normalfont
\onecolumn

\fancyhead[C]{Simulation Based Stacking, Appendix}
\fancyhead[R]{\thepage}

\renewcommand{\thesection}{\Alph{section}}

\section*{\centering \NoCaseChange{Appendices to ``Simulation-Based Stacking''}}  

\section{\NoCaseChange{Background}}\label{app:back}
\paragraph{Simulation-based inference (SBI).}
In a typical Bayesian inference setting, we are interested in the posterior inference of parameter $\theta \in \R_d$ given observed data $y$.  The ultimate goal is to infer a version of the condition distribution $p(\theta|y)$ and/or sample from it. 
In simulation-based inference, we cannot evaluate the likelihood, but we can easily simulate outcomes from the data model $y|\theta$. We create a  size-$N$ joint simulation table  $\{(\theta_n, y_n)\}_{n=1}^N$ by first drawing parameters $\theta_n$ from the prior distribution $p(\theta)$ and one realization of data $y_n$ from the data-forward model $p(y|\theta_n)$.

\paragraph{Normalizing flow.}
The posterior inference task becomes a conditional density estimation task. In one neural posterior estimation, we parameterize the posterior density as a normalizing flow $q_{\beta}(\theta|y)$,  where $\beta \in \R_m$ is the normalizing flow parameter. More concretely, let $z$ be a multivariate standard Gaussian random variable in $\R_d$, consider $\theta=f_{\beta, y}(z)$, a bijective mapping from $z \in \R_d$ to $\theta \in \R_d$. Let $z=F_{\beta, y}(\theta)$ be the inverse of this mapping. From change-of-variable, the implied distribution of $\theta$ is $q_{\beta}(\theta |y) = p_{\mathrm{MVN}}(z=F_{\beta, y}(\theta)) | \mathrm{det} ( \frac{d}{d\theta} F_{\beta, y}(\theta))|$, where $p_{\mathrm{MVN}}$ denotes the density of standard Gaussian. When the bijective $f_{\beta, y}$ is flexible enough, in principle, the family of derived densities $\{q_{\beta}(\theta|y): \beta\in \R_m\}$ covers all smooth conditional distributions on $\R_d$.

\paragraph{Neural posterior estimation.}
Given any $\beta$, $q_{\beta}(\theta|y)$ is ensured to be a normalized conditional density on $\Theta$  by design: $q_{\beta}(\theta|y)\geq 0$ and 
$\int_{\Theta} q_{\beta}(\theta|y) d\theta =1$. Since we have simulations from the joint density $p(\theta, y)$, we fit this normalizing flow to minimize the KL divergence: 
$$
\hat \beta =  \underset{\beta \in \R_m}{\mathrm{arg~max}} \sum_{i=1}^N \log q_{\beta}(\theta_i|y_i).
$$
This inferred $q_{\hat \beta}(\theta|y)$ is one neural posterior estimation. We are able to (a) evaluate the density $q_{\hat \beta}(\theta|y)$ for any $(\theta, y)$ pair, and (b) given any $y$, sample IID draws $\tilde \theta_1, \dots,  \tilde \theta_S$ from  $q_{\hat \beta}(\theta|y)$. 

By varying the normalizing flow architecture or hyperparameters, we obtain multiple neural posterior estimations $\{q_{1}(\theta|y), q_{2}(\theta|y), \dots,  q_{K}(\theta|y)\}$.
This present paper aims to aggregate them to provide better inference.

\section{\NoCaseChange{Additional Theory and Proof}}\label{app:theory}

\paragraph{Recap of the general SBI stacking setting.}
Consider a parameter space $\Theta = \R_d$ with the usual Borel measure, and any measurable data space  $\mathcal{Y}$. 
We are given a sample of $N$ IID draws $\{(\theta_i, y_i): 1\leq i \leq N\}$ from a joint distribution $p(\theta, y)$ on the product space $\Theta\times \mathcal{Y}$. Let $p(\theta|y)$ be a version of the true conditional density. We are given $K\geq 2$ conditional densities $q_k(\theta|y)$. Besides, for each inference index $k\leq K$ and simulation index $n\leq K$, we have obtained an IID sample of $S$ draws: $\tilde \theta_{kn1}, \dots, \tilde \theta_{knS} \sim q_k( \theta|y_n)$. We always denote $k$ the index of inference, $n$ the index of simulations, and $s$ the index of the posterior draw. See Figure \ref{fig:ill} for visualization.

Typically we have a training-validation spilt to avoid over-fitting. For the theory part, it suffices to assume that the $K$ conditional densities are fixed, and will not change as $N$ increases.

\subsection{General posterior stacking (Proposition \ref{thm:consi})}
We give a formal definition of Proposition \ref{thm:consi} in the paper.

In the general posterior stacking, we specify a combination form, and an objective function of distributional approximation. Suppose we can evaluate the score of the stacked distribution, 
$U(q_{\w}^*(\cdot | y_n), \theta_n)$. Further if the scoring rule $U$ is proper \eqref{eq:proper}.  We combine $K$ posterior inferences with the form \eqref{eq:stackForm}, and fit stacking parameters $\w$ via a sample M-estimate, 
\begin{equation}
 \hat \w = \arg \max \sum_{n=1}^N U(q_{\w}^*(\cdot | y_n), \theta_n). 
\end{equation}
The expectation 
$\E_{p(y, \theta)} U(q_{\hat \w}^*(\cdot | y), \theta)$  is the average utility function of the stacked posterior.

\setcounter{proposition}{0}
\begin{proposition}  
Clause I (population utility). For any given $\w$, 
as $N\to \infty$, 
$$\frac{1}{N} \sum_{n=1}^N U(q_{\w}^*(\cdot | y_n), \theta_n) = \E_{p(y, \theta)} U(q_{\w}^*(\cdot | y), \theta)+ o_p(1).
$$

Clause II (asymptotic optimality).
For any $\epsilon>0$, any given $\w^{\prime}$, 
as $N\to \infty$,  
$$\Pr \left( \E_{p(y, \theta)} U(q_{\hat \w}^*(\cdot | y), \theta) \leq    \E_{p(y, \theta)} U(q_{\hat \w^{\prime}}^*(\cdot | y), \theta)+\epsilon \right) \to 1.$$

Clause III (convergence). Further assume that (a) the support of $\w$ is compact, (b) there is a true $\w_0$, such that $q_{ \w_0}^*(\cdot | y) = p(\theta|y)$ almost everywhere,  (c) the combination is locally identifiable at the truth, i.e., if $q_{\hat \w}^*(\cdot | y) = p(\theta|y)$ then $\w = \w_0$, and (d) the scoring rule $U$ is strictly proper, then the stacking weight estimate is consistent as $N\to \infty$,
$$\hat w = w_o + o_p(1).$$
\end{proposition}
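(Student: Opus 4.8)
\textbf{Proof plan for Proposition~\ref{thm:consi}.}

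The plan is to treat this as a standard M-estimation / empirical-process argument, handling the three clauses in sequence since each builds on the previous. For Clause~I, I would note that for any fixed $\w$ the map $(y,\theta)\mapsto U(q_\w^*(\cdot|y),\theta)$ is a fixed measurable function and $\{(\theta_n,y_n)\}$ are IID draws from $p(\theta,y)$; provided this function is $p$-integrable (an implicit regularity assumption I would state explicitly, since e.g.\ for the log score one needs $\E_p|\log q_\w^*(\theta|y)|<\infty$), the weak law of large numbers gives $\frac1N\sum_n U(q_\w^*(\cdot|y_n),\theta_n)=\E_{p(y,\theta)}U(q_\w^*(\cdot|y),\theta)+o_p(1)$. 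This is the only place the ``one draw per $y_n$'' structure matters, and the key observation is that it does not matter: the objective is an average over the joint distribution, so a single $\theta_n$ per $y_n$ is exactly an unbiased Monte Carlo sample of the population functional.

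For Clause~II, write $L(\w)\coloneqq\E_{p(y,\theta)}U(q_\w^*(\cdot|y),\theta)$ and $L_N(\w)\coloneqq\frac1N\sum_n U(q_\w^*(\cdot|y_n),\theta_n)$, so $\hat\w=\arg\max L_N$. By definition of $\hat\w$, $L_N(\hat\w)\ge L_N(\hat\w')$ for the (random, estimator-based) comparator $\hat\w'$, hence $L(\hat\w)\ge L_N(\hat\w)-|L(\hat\w)-L_N(\hat\w)|\ge L_N(\hat\w')-|L(\hat\w)-L_N(\hat\w)|$. If $\hat\w'$ were a \emph{fixed} $\w'$ one could finish with two applications of Clause~I; but the statement compares against $\hat\w'$, an estimator. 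I would therefore either (i) read $\hat\w'$ as ``any estimator sequence'' and invoke a uniform law of large numbers, $\sup_\w|L_N(\w)-L(\w)|=o_p(1)$ (which needs a mild equicontinuity/compactness condition, e.g.\ $\w$ ranges over a compact set and $U(q_\w^*(\cdot|y),\theta)$ is continuous in $\w$ with an integrable envelope), giving $L(\hat\w)\ge L(\hat\w')-2\sup_\w|L_N-L|\ge L(\hat\w')-\epsilon$ with probability tending to one; or (ii) if the intended reading is a fixed $\w'$, then just $L_N(\w')\to L(\w')$ and $L_N(\hat\w)\ge L_N(\w')$ combined with $L(\hat\w)\ge L_N(\hat\w)-o_p(1)$ suffices without uniformity. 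I would present version (i) as it is strictly stronger and matches the way the proposition is phrased.

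For Clause~III, under the added assumptions I would run the classical Wald-type consistency argument. Strict propriety of $U$ means $L(\w)<L(\w_0)$ whenever $q_\w^*(\cdot|y)\ne p(\cdot|y)$ on a positive-measure set of $y$; combined with local identifiability (c) this makes $\w_0$ the unique maximizer of $L$ on the compact parameter set (a). The uniform LLN from Clause~II plus compactness and continuity then yields $\hat\w\to\w_0$ in probability by the standard argument-max theorem: for any neighborhood $\mathcal N$ of $\w_0$, $\sup_{\w\notin\mathcal N}L(\w)<L(\w_0)$ by continuity and compactness of the complement, and $|L_N-L|\to 0$ uniformly forces $\hat\w\in\mathcal N$ eventually. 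I would spell out that ``locally identifiable at the truth'' must be upgraded to global uniqueness of the maximizer on the compact set for this to go through cleanly — either by assuming it or by noting that propriety already gives $L(\w)=L(\w_0)$ only when $q_\w^*=p$ a.e., so (c) as stated does suffice.

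\textbf{Main obstacle.} The genuinely delicate point is the uniformity needed in Clauses~II and III: turning the pointwise WLLN of Clause~I into a uniform law over $\w$ requires regularity on the combination form $\Phi$ and the score $U$ (continuity in $\w$, an integrable envelope, compact parameter domain) that the proposition only partially states. For the concrete instances in the paper — finite simplex or bounded linear weights, smooth flows with integrable log-densities, bounded interval/rank scores — these conditions hold, so I would verify them case-by-case (or under a blanket ``$U(q_\w^*(\cdot|y),\theta)$ is dominated by an integrable function uniformly in $\w$'' hypothesis) rather than attempt the fullest generality. Everything else is bookkeeping with the definitions of proper and strictly proper scoring rules.
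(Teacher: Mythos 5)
Your plan follows essentially the same route as the paper's proof: pointwise WLLN for Clause I, the argmax inequality $L_N(\hat\w)\ge L_N(\w')$ combined with concentration of the empirical utility around the population utility at the two weights for Clause II, and the classical M-estimation consistency lemma (Wald-type argument, with the well-separated maximum supplied by strict propriety plus identifiability on a compact set) for Clause III. The one substantive difference is that you are more careful at the step controlling $\lvert L_N(\hat\w)-L(\hat\w)\rvert$: the paper asserts this is $o_p(1)$ ``from WLLN,'' but since $\hat\w$ is data-dependent a pointwise law of large numbers does not apply there, and your insistence on a uniform LLN (compactness, continuity in $\w$, integrable envelope) is exactly what is needed to make that step rigorous; note, though, that your fallback ``version (ii)'' would inherit the same gap, because the bound $L(\hat\w)\ge L_N(\hat\w)-o_p(1)$ is precisely where uniformity enters regardless of whether the comparator is fixed. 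Finally, you prove the comparison in the direction $L(\hat\w)\ge L(\w')-\epsilon$ with probability tending to one, which is the meaningful optimality statement and the one consistent with the paper's interpretation via the induced divergence $D_U$; the display in the proposition (and the concluding ``$=o(1)$'' in the paper's own chain, which as written bounds the probability of the claimed event rather than its complement) has the direction of the inequality garbled, and your version is the corrected one.
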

\begin{proof}
We briefly sketch the proof since similar proofs have appeared in previous propositions.

Clause I is from the weak law of large numbers.

Clause II addresses the optimality of the resulting divergence metric rather than the estimated $\hat \w$. From WLLN, 
for any $\epsilon>0$, 
as $N\to \infty$,  
$$\Pr \left( \left| \frac{1}{N}\sum_{n=1}^N U(q_{\hat w}^*(\cdot | y_n), \theta_n) - \E_{p(y, \theta)} U(q_{\hat \w}^*(\cdot | y), \theta) \right|\geq
1/2 \epsilon \right) = o(1).$$
$$\Pr \left( \left| \frac{1}{N} \sum_{n=1}^N U(q_{\w^\prime}^*(\cdot |  y_n), \theta_n) - \E_{p(y, \theta)} U(q_{\w^\prime}^*(\cdot | y), \theta) \right|\geq
1/2 \epsilon \right)  = o(1).$$
Furthermore, by definition,  
$$\frac{1}{N} \sum_{n=1}^N U(q_{\w^\prime}^*(\cdot |  y_n), \theta_n) \leq \frac{1}{N} \sum_{n=1}^N U(q_{\hat \w}^*(\cdot |  y_n), \theta_n).$$
Combine these three lines, we have
\begin{align*}
&\Pr \left( \E_{p(y, \theta)} U(q_{\hat \w}^*(\cdot | y), \theta) \leq    \E_{p(y, \theta)} U(q_{\hat \w^{\prime}}^*(\cdot | y), \theta)+\epsilon \right)\\ 
&= \Pr \left( \E_{p(y, \theta)} U(q_{\hat \w}^*(\cdot | y), \theta) \leq    \E_{p(y, \theta)} U(q_{\hat \w^{\prime}}^*(\cdot | y), \theta)+\epsilon,   ~~ \frac{1}{N} \sum_{n=1}^N U(q_{\w^\prime}^*(\cdot |  y_n), \theta_n) \leq \frac{1}{N} \sum_{n=1}^N U(q_{\hat \w}^*(\cdot |  y_n), \theta_n)\right) \\
&\leq \Pr \left( \left| \frac{1}{N}\sum_{n=1}^N U(q_{\hat w}^*(\cdot | y_n), \theta_n) - \E_{p(y, \theta)} U(q_{\hat \w}^*(\cdot | y), \theta) \right| 
+ \left| \frac{1}{N} \sum_{n=1}^N U(q_{\w^\prime}^*(\cdot |  y_n), \theta_n) - \E_{p(y, \theta)} U(q_{\w^\prime}^*(\cdot | y), \theta) \right| \geq \epsilon\right)\\
&\leq \Pr \left( \left| \frac{1}{N}\sum_{n=1}^N U(q_{\hat w}^*(\cdot | y_n), \theta_n) - \E_{p(y, \theta)} U(q_{\hat \w}^*(\cdot | y), \theta) \right|\geq
1/2 \epsilon \right) \\
&+
\Pr \left( \left| \frac{1}{N} \sum_{n=1}^N U(q_{\w^\prime}^*(\cdot |  y_n), \theta_n) - \E_{p(y, \theta)} U(q_{\w^\prime}^*(\cdot | y), \theta) \right|\geq
1/2 \epsilon \right)\\
&=o(1),
\end{align*}
which proves Clause II.

Clause III is a direct application of Lemma \ref{lemma:M}. Here we assume a compact $\w$ space to ensure uniform convergence.
\end{proof}

\subsection{Convergence in mixture stacking (Prop.~\ref{thm:KL})}

First, we consider mixture stacking with the log density objective. Given a simplex weight $\w\in \s_K$, The mixed posterior density \eqref{eq:simplex} is  $q^{\mix}_{\w}(\theta|y) \coloneqq \sum_{k=1}^K w_k q_k(\theta|y)$. It is straightforward to derive the well-known correspondence between the log density and the Kullback–Leibler (KL) divergence.

\begin{proposition}\label{thm:KL}
Clause (I). For any $\w\in \s_K$, as $N\to \infty$, the stacking objective converges in distribution to the negative conditional KL divergence between the true posterior and stacked approximation, i.e.,  
$$ \frac{1}{N} \sum_{i=1}^N \log q^{\mix}_{\w}(\theta_i|y_i) + \KL (p(\theta|y), q^{\mix}_{\w}(\theta|y)) -C = o_{p}(1),
$$
where $C= \log p(\theta|y) p(\theta, y) $ is a constant that does not depends on $\w$ or $q$.

Clause (II). If (a) there exists a $\w_0\in \s^K$, such that $p(\theta|y) =  q^{\mix}_{\w_0}(\theta|y)$ almost everywhere, and (b) the mixture form \eqref{eq:simplex} is locally identifiable at the truth, i.e., if there is there is a $\w^\prime$ such that  $q^{\mix}_{\w^\prime}(\theta|y)=q^{\mix}_{\w_0}(\theta|y)$ almost everywhere, then  $\w^\prime=\w_0$.
Then the optimal stacking weight $$\hat \w = \arg \max_{\w\in \s_K} \frac{1}{N} \sum_{i=1}^N \log q^{\mix}_{\w}(\theta_i | y_i) $$ converges to the true 
$\w_0$ in probability, i.e., for any $\epsilon>0$,
$$ \Pr_{p}(|\hat \w - \w_0 |\geq \epsilon) \to 0, ~~\mathrm{as}~ N\to \infty.$$
\end{proposition}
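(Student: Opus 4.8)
\textbf{Proof proposal for Proposition~\ref{thm:KL}.}

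The plan is to treat Clause~(I) as a direct bookkeeping exercise with the strong law of large numbers, and Clause~(II) as a standard consistency argument for an M-estimator whose population objective is uniquely maximized at the truth. For Clause~(I), I would start from the definition of the conditional KL divergence, $\KL(p(\theta|y),q^{\mix}_{\w}(\theta|y)) = \E_{p(\theta,y)}\log\big(p(\theta|y)/q^{\mix}_{\w}(\theta|y)\big)$, and split the logarithm into $\E_{p(\theta,y)}\log p(\theta|y) - \E_{p(\theta,y)}\log q^{\mix}_{\w}(\theta|y)$. The first term is the constant $C$ (it does not involve $\w$); the second term is exactly the population version of the averaged stacking objective. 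Since $(\theta_i,y_i)$ are i.i.d.\ draws from $p(\theta,y)$ and $\log q^{\mix}_{\w}(\theta_i|y_i)$ is a fixed measurable function of the pair, the weak (indeed strong) law of large numbers gives $\frac1N\sum_{i=1}^N \log q^{\mix}_{\w}(\theta_i|y_i) = \E_{p(\theta,y)}\log q^{\mix}_{\w}(\theta|y) + o_p(1)$, provided the expectation is finite; rearranging yields the claimed identity. A small remark on integrability: one should note $\E_{p}\log q^{\mix}_{\w} \le \E_p \log p$ (Gibbs' inequality / properness of the log score), so the expectation is bounded above, and for it to be finite from below one invokes the standing assumption that the $q_k$ are fixed densities with the relevant expectations finite; this is the only place a mild regularity hypothesis is needed.

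For Clause~(II), I would argue consistency of $\hat\w$ using the classical argument for extremum estimators. Define the population objective $M(\w) \coloneqq \E_{p(\theta,y)}\log q^{\mix}_{\w}(\theta|y)$ and the sample objective $M_N(\w) \coloneqq \frac1N\sum_{i=1}^N \log q^{\mix}_{\w}(\theta_i|y_i)$. By Clause~(I), $M_N(\w) \to M(\w)$ pointwise in $\w$; since $\w$ ranges over the compact simplex $\s_K$ and $\w\mapsto \log q^{\mix}_{\w}(\theta|y)$ is concave (log of a linear function) and continuous, this pointwise convergence upgrades to uniform convergence $\sup_{\w\in\s_K}|M_N(\w)-M(\w)| = o_p(1)$ by a standard convexity/compactness argument (e.g.\ the convexity lemma, or a bracketing argument). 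Next, by properness of the log score, $M(\w) = C - \KL(p,q^{\mix}_{\w}) \le C$, with equality iff $q^{\mix}_{\w}(\theta|y)=p(\theta|y)$ a.e.; under assumption (a) this holds at $\w_0$, and under the local identifiability assumption (b) $\w_0$ is the \emph{unique} maximizer in a neighborhood — and, since $M$ is concave on the convex set $\s_K$, a locally unique maximizer of a concave function is the global maximizer, so $\w_0$ uniquely maximizes $M$ over $\s_K$. The standard argmax-consistency theorem (uniform convergence of $M_N$ to $M$ on a compact set, plus a well-separated unique maximizer of $M$) then delivers $\hat\w \to \w_0$ in probability, which is exactly the assertion $\Pr_p(|\hat\w-\w_0|\ge\epsilon)\to 0$. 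I expect this to be exactly the content of the ``Lemma~\ref{lemma:M}'' the authors invoke for Proposition~\ref{thm:consi}, so one can simply cite it.

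The main obstacle is the upgrade from pointwise to uniform convergence of $M_N$ over $\s_K$, together with justifying that the well-separated-maximum condition holds: a priori local identifiability only rules out other maximizers \emph{near} $\w_0$, not far away. The clean fix is to exploit that $M$ is concave on the simplex, so local uniqueness implies global uniqueness, and concavity also makes the uniform law of large numbers essentially automatic (pointwise LLN plus concavity gives uniform convergence on compacta). One should also be slightly careful about the possibility that some $q_k$ puts zero mass where $p$ has mass, which could make $M(\w) = -\infty$ on part of the simplex; this is harmless for the argument (such $\w$ are not near the maximizer), but it is worth a sentence noting that the sup in the definition of $\hat\w$ is attained because $M_N$ is upper semicontinuous on the compact simplex. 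Everything else is routine: the integrability check, the i.i.d.\ LLN, and the final appeal to the extremum-estimator consistency theorem.
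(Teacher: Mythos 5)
Your proposal is correct and follows essentially the same route as the paper: Clause (I) is the law of large numbers plus the standard decomposition of the expected log score into $C-\KL$, and Clause (II) invokes exactly the M-estimation consistency lemma (Lemma~\ref{lemma:M}) that the paper uses, with uniqueness of the population maximizer coming from strict properness of the log score plus the identifiability assumption. If anything you are more careful than the paper, which simply asserts that ``the WLLN ensures the uniform convergence condition''; your observation that concavity of $\w\mapsto\log q^{\mix}_{\w}$ on the compact simplex upgrades pointwise to uniform convergence (and promotes local to global uniqueness of the maximizer) fills in precisely the step the paper leaves implicit.
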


\begin{proof}
    Clause (I) is a direct application of the weak law of large numbers.   Since $\{(\theta_i,y_i)\}$ are IID draws from the joint,  in probability we have 
    \begin{align*}
  \frac{1}{N} \sum_{i=1}^N \log q^{\mix}_{\w}(\theta_i|y_i) &\stackrel{p}{\to} \int_{\Theta\times\mathcal{Y}} \log q^{\mix}_{\w}(\theta|y) p(\theta, y)d\theta dy\\
   &= \int_{\Theta\times\mathcal{Y}} q^{\mix}_{\w}(\theta|y) p(\theta| y) p(y)d\theta dy\\
   &= -\int_{\Theta\times\mathcal{Y}} \left(\log p^{\mix}_{\w}(\theta|y) - \log q^{\mix}_{\w}(\theta|y) \right) p(\theta| y) p(y)d\theta dy + \int  \log p^{\mix}_{\w}(\theta|y)  p(\theta, y) d\theta dy\\
   &=-\KL (p(\theta|y), q^{\mix}_{\w}(\theta|y)) +C.
    \end{align*}   
\end{proof}

Clause (II) is a consequence of the convergence of the maximum likelihood estimation (MLE). In order to apply to other proportions, we state 
the  general M-estimation theory. For example, the following lemma is from \citet{van2000asymptotic}.

\begin{lemma}\label{lemma:M}
Assuming $y_1, \dots y_N$ are IID data from $p_{\beta_0}(y)$.  Let $M_n$ be random functions and let M be a fixed function of the parameter $\beta$ such that for every $\epsilon >0$,
\begin{equation}\label{eq:M_1}
\sup_{\beta} |M_n(\beta)- M(\beta)| \pto  0, 
\end{equation}
\begin{equation}\label{eq:M_2}
\sup_{\beta: d(\beta, \beta_0)\geq \epsilon} M(\beta)< M(\beta_0).    
\end{equation}
Then any sequence of estimators $\hat \beta_n$ with 
$$M_n(\hat \beta) \geq  M_n(\beta_0) - o_p(1)$$
converges in probability to $\beta_0$.
\end{lemma}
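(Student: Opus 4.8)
The statement to prove is Lemma~\ref{lemma:M}, the standard consistency result for M-estimators under uniform convergence. The plan is to follow the classical Wald-type argument: combine the uniform convergence hypothesis \eqref{eq:M_1} with the well-separated-maximum hypothesis \eqref{eq:M_2} to pin down $\hat\beta_n$ near $\beta_0$. Since this is a textbook lemma (it is, as the excerpt notes, from \citet{van2000asymptotic}), the proof should be short and self-contained, and the main work is bookkeeping of $\epsilon$'s rather than any genuine obstacle.

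First I would record the near-maximizing property of $\hat\beta_n$: by assumption $M_n(\hat\beta_n) \geq M_n(\beta_0) - o_p(1)$. Next I would use \eqref{eq:M_1} twice to pass from $M_n$ to $M$ at the two points $\hat\beta_n$ and $\beta_0$, obtaining
\begin{equation*}
M(\hat\beta_n) \geq M_n(\hat\beta_n) - \sup_\beta|M_n(\beta)-M(\beta)| \geq M_n(\beta_0) - o_p(1) \geq M(\beta_0) - 2\sup_\beta|M_n(\beta)-M(\beta)| - o_p(1),
\end{equation*}
so that $M(\hat\beta_n) \geq M(\beta_0) - o_p(1)$ by \eqref{eq:M_1}. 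Then I would argue by contradiction with the separation condition: fix $\epsilon>0$ and let $\delta = M(\beta_0) - \sup_{\beta: d(\beta,\beta_0)\geq\epsilon} M(\beta)$, which is strictly positive by \eqref{eq:M_2}. On the event $\{d(\hat\beta_n,\beta_0)\geq\epsilon\}$ we have $M(\hat\beta_n) \leq M(\beta_0) - \delta$, hence $M(\beta_0) - M(\hat\beta_n) \geq \delta$. Combining with the display above, on this event the nonnegative random quantity $M(\beta_0) - M(\hat\beta_n)$ is both $\geq\delta$ and $\leq 2\sup_\beta|M_n(\beta)-M(\beta)| + o_p(1)$; since the right side converges to $0$ in probability, the probability of the event $\{d(\hat\beta_n,\beta_0)\geq\epsilon\}$ tends to $0$. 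As $\epsilon>0$ was arbitrary, $\hat\beta_n \pto \beta_0$.

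The only point requiring a little care is the handling of the $o_p(1)$ term together with the sup term: one should phrase the final step as ``$\Pr(d(\hat\beta_n,\beta_0)\geq\epsilon) \leq \Pr(2\sup_\beta|M_n(\beta)-M(\beta)| + |o_p(1)| \geq \delta) \to 0$'', using that a sum of two terms each $\pto 0$ is $\pto 0$. There is no real obstacle here; the lemma is invoked in the excerpt precisely so that the substantive verifications — checking \eqref{eq:M_1} and \eqref{eq:M_2} in each concrete stacking setting (e.g. compactness of the weight space $\w$ to get the uniform convergence, and local identifiability plus strict propriety to get the well-separated maximum) — can be done separately where each proposition is proved.
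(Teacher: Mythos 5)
Your proof is correct: it is precisely the classical Wald-type consistency argument (Theorem 5.7 of \citet{van2000asymptotic}), and every step — transferring the near-maximization from $M_n$ to $M$ via the uniform bound, then invoking the well-separated maximum to bound $\Pr(d(\hat\beta_n,\beta_0)\geq\epsilon)$ — is carried out correctly. The paper itself does not prove this lemma but simply cites it from \citet{van2000asymptotic}, so your write-up supplies exactly the standard proof that the paper defers to; there is nothing to add or correct.
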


In the context of Clause (II), the stacking parameter $\w$ is on a compact space $\s_K$.  Because (a) the identifiable assuming and (b) the unique minimizer to $\E_p\log q(x)$ is $p=q$, the true weight $\w_0$ is the unique minimize of 
$\int_{\Theta\times\mathcal{Y}} q^{\mix}_{\w}(\theta|y) p(\theta| y) p(y)d\theta dy$, i.e., for any $\epsilon>0$, and any $\w^\prime$ such that 
$||\w_0 - \w^{\prime}||>\epsilon$, we have 
$$\int_{\Theta\times\mathcal{Y}} q^{\mix}_{\w_0}(\theta|y) p(\theta| y) p(y)d\theta dy > \int_{\Theta\times\mathcal{Y}} q^{\mix}_{\w^\prime}(\theta|y) p(\theta| y) p(y)d\theta dy,$$
which verifies condition \eqref{eq:M_2}, while the WLLN ensures the uniform convergence condition 
\eqref{eq:M_1}. Applying Lemma \ref{lemma:M} proves Clause II.


\subsection{Rank-based calibration and stacking (Prop.~\ref{thm:rank_div} \& \ref{thm:rank_linear})}

We now deal with rank-based divergence and stacking. We assume $\Theta=\R$ for the next two propositions since we only compute marginal ranks.

\begin{proposition}\label{thm:rank_div}
The rank-based metric $D_{\mathrm{rank}}(p, q)$ defined in \eqref{eq_rankdiv} is a generalized divergence:  (I) $D_{\mathrm{rank}}(p, q) \geq 0$ 
for any $p$ and $q$. (II)  When $q(\theta|y ) = p(\theta|y)$ almost everywhere,  $D_{\mathrm{rank}}(p, q) = 0$. 
\end{proposition}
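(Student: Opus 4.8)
The statement to prove is Proposition~\ref{thm:rank_div}: that $D_{\mathrm{rank}}(p,q)$ defined in \eqref{eq_rankdiv} is a generalized divergence, meaning (I)~nonnegativity for all $p,q$, and (II)~it vanishes when $q(\theta|y)=p(\theta|y)$ almost everywhere. The plan is to unwind the definition: $D_{\mathrm{rank}}(p,q) = D_0\big(F_p(\theta|y),F_q(\theta|y)\big)$ where $(\theta,y)\sim p$, and $D_0(X_1,X_2) = \int_0^1 |\Pr(X_1\le z)-\Pr(X_2\le z)|^2\,dz$. Part (I) is then immediate: the integrand $|\Pr(X_1\le z)-\Pr(X_2\le z)|^2$ is pointwise nonnegative, so its integral over $[0,1]$ is nonnegative, for \emph{any} choice of $p$ and $q$. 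Nothing beyond this observation is needed for (I); in particular no assumption relating $p$ and $q$ is used.

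For part (II), assume $q(\theta|y)=p(\theta|y)$ for almost every $(\theta,y)$ (w.r.t.\ the base measure). First I would argue that this forces the CDF transforms to agree: $F_q(\theta|y)=\int_{-\infty}^\theta q(\theta'|y)\,d\theta' = \int_{-\infty}^\theta p(\theta'|y)\,d\theta' = F_p(\theta|y)$ for almost every $y$ and every $\theta$, since changing the integrand on a null set does not change the integral. Hence, as random variables with $(\theta,y)\sim p$, we have $F_p(\theta|y)=F_q(\theta|y)$ almost surely, so they have identical laws on $[0,1]$: $\Pr(F_p(\theta|y)\le z)=\Pr(F_q(\theta|y)\le z)$ for every $z\in[0,1]$. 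Therefore the integrand in $D_0$ is identically zero and $D_{\mathrm{rank}}(p,q)=0$. (As an aside that the proof need not belabor: when $p(\cdot|y)$ is continuous, the probability-integral transform makes $F_p(\theta|y)\mid y$ uniform on $[0,1]$, so $\Pr(F_p(\theta|y)\le z)=z$ and one recovers the ``rank histogram is flat'' picture; but this is not required for the claimed implication.)

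The argument is essentially routine; the only place demanding a little care is the measure-theoretic bookkeeping in part (II)---namely making precise that ``$p=q$ a.e.'' propagates through the integral defining the CDF (fixing which null set, and for which $y$), and then from there to equality of the induced laws of $F_p(\theta|y)$ and $F_q(\theta|y)$ under the sampling distribution $(\theta,y)\sim p$. I expect this to be the main (mild) obstacle: one should be explicit that the exceptional $y$-set has $p(y)$-measure zero and hence does not affect $\Pr(\cdot)$ computed under $(\theta,y)\sim p$. A brief remark could also note that $D_{\mathrm{rank}}$ is a \emph{generalized} (not genuine) divergence because $D_{\mathrm{rank}}(p,q)=0$ does not imply $q=p$: it only constrains the law of $F_q(\theta|y)$ under $p$, i.e.\ the rank distribution, which is a projection of the full conditional.
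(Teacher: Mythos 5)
Your proof is correct and follows essentially the same route as the paper's: unwind $D_{\mathrm{rank}}(p,q)=D_0(F_p(\theta|y),F_q(\theta|y))$, note the integrand of $D_0$ is pointwise nonnegative for (I), and observe for (II) that $p=q$ a.e.\ forces the two transformed random variables to coincide almost surely and hence share a law, making $D_0$ vanish. Your extra measure-theoretic bookkeeping and the closing remark that $D_{\mathrm{rank}}$ is only a \emph{generalized} divergence merely make explicit what the paper states more tersely (the latter point appears in the paper's discussion immediately after the proof).
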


\begin{proof}
From the definition of $D_{\mathrm{rank}}(p, q)$, 
let $(\theta,y) \sim p(\theta, y)$ be a pair of random variables from $p$, and let $x_1=  F_{p}(\theta|y)$ and $x_2=  F_{q}(\theta|y)$ be two transformed random variables, then  $D_{\mathrm{rank}}(p, q) = D_{0}(x1, x2)$. Because $D_{0}$  is a well-defined divergence,   $D_{\mathrm{rank}}(p, q) = D_{0}(x1, x2) \geq 0$ for any $p$ and $q$.  If $p=q$, then $x_1$ and $x_2$ have the same distribution and hence $D_{\mathrm{rank}}(p, q) = D_{0}(x1, x2)=0$.
\end{proof}

$D_{\mathrm{rank}}(p, q)$ is CDF-based. Its empirical estimates use rank only. It might be unsatisfactory that $D_{\mathrm{rank}}(p, q)$ is not a strict divergence, meaning that there can be a distinct pair of joint distributions $p\neq q$, such that $D_{\mathrm{rank}}(p, q)=0$. Indeed, for any $p(\theta, y)$, let $q(\theta|y)\coloneqq p (\theta)$, then $D_{\mathrm{rank}}(p, q)=0$. This edge case is a well-known example where the traditional rank-based calibration is not sufficient and can lead to false-negative testing. That said, the rank-based divergence has the advantage that it is rank/sample only; no density evaluation of $q$ is needed. We find that the rank-based divergence is particularly powerful when augmented with other density-based divergences.

Let us briefly recap the rank-related definitions.
We still consider the mixture stacking form  \eqref{eq:simplex}: given a simplex weight $\w\in \s_K$, the mixed posterior density is  $q^{\mix}_{\w}(\theta|y) \coloneqq \sum_{k=1}^K w_k q_k(\theta|y)$. We assume all conditional densities  $q_k(\theta|y)$ are continuous on $\Theta$.
Let $r_{kn}$ is the  rank statistics  defined in \eqref{eq:rank}:
$
r_{kn} \coloneqq \frac{1}{S}\sum_{s=1}^S \mathds{1} \{\theta_{n} \geq  \tilde \theta_{kns}\}. 
$

\begin{proposition}\label{thm:rank_linear}
Clause (I). For any fixed value $\mu\in \Theta$, and any weight $\w\in\s_K$,
let $$F_{k,y}(\mu)\coloneqq \Pr_{\theta \sim q_k(\theta|y)}(\theta < \mu), ~~~~\mathrm{and}~~  F_{y}^*(\mu)\coloneqq\Pr_{\theta \sim q^{\mix}_{\w}}(\theta < \mu)$$ be the CDFs in the individual conditional distributions and the mixture, then the CDF remains linear: 
\begin{equation}\label{eq_linear_rank_po}
F_{y}^*(\mu)= \sum_{k=1}^K w_k F_{k,y}(\mu).    
\end{equation}

Clause (II). For any simulation index $n$, define   
$$r^{\mix}_{n} \coloneqq \sum_{k=1}^K w_k r_{kn}$$
be the stacked rank,
then for any weight $\w\in\s_K$, this linearly additive rank converges to the mixture CDF: 
$$
r^{\mix}_{n}  \to \Pr_{\theta \sim q^{\mix}_{\w}(\theta | y_n) }( \theta \leq \theta_{n} ), ~~ \mathrm{almost~~surely,  ~~~as~~} S\to \infty. 
$$
\end{proposition}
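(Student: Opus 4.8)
The plan is to prove the two clauses separately, with Clause (I) being an elementary computation and Clause (II) following from it by a strong-law argument. For Clause (I), I would simply unfold the definition of the mixture density: for any fixed $\mu$ and $y$,
\begin{equation*}
F_y^*(\mu) = \int_{-\infty}^{\mu} q^{\mix}_{\w}(\theta|y)\,d\theta = \int_{-\infty}^{\mu} \sum_{k=1}^K w_k q_k(\theta|y)\,d\theta = \sum_{k=1}^K w_k \int_{-\infty}^{\mu} q_k(\theta|y)\,d\theta = \sum_{k=1}^K w_k F_{k,y}(\mu),
\end{equation*}
where exchanging the finite sum and the integral is justified since each $q_k(\cdot|y)$ is a nonnegative integrable function and the sum is finite. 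This is the whole content of Clause (I); no additional hypotheses beyond $\w\in\s_K$ and integrability of the $q_k$ are needed.

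For Clause (II), I would fix the simulation index $n$ (so $\theta_n$ and $y_n$ are fixed once conditioned on) and work with the draws $\tilde\theta_{kns}$, $s=1,\dots,S$, which by assumption are i.i.d.\ from $q_k(\cdot|y_n)$. For each fixed $k$, the indicator average $r_{kn} = \frac{1}{S}\sum_{s=1}^S \mathds{1}\{\tilde\theta_{kns}\le\theta_n\}$ is a sample mean of i.i.d.\ Bernoulli random variables with mean $\Pr_{\theta\sim q_k(\cdot|y_n)}(\theta\le\theta_n) = F_{k,y_n}(\theta_n)$ (equality of strict and non-strict inequality holds since $q_k(\cdot|y_n)$ is continuous, so $\Pr(\theta=\theta_n)=0$). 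By the strong law of large numbers, $r_{kn}\to F_{k,y_n}(\theta_n)$ almost surely as $S\to\infty$, for each $k=1,\dots,K$. Since $K$ is finite, the intersection of these $K$ almost-sure events is still almost sure, so on that event $r^{\mix}_n = \sum_k w_k r_{kn} \to \sum_k w_k F_{k,y_n}(\theta_n)$. Invoking Clause (I) with $\mu=\theta_n$, $y=y_n$, the right-hand side equals $F_{y_n}^*(\theta_n) = \Pr_{\theta\sim q^{\mix}_{\w}(\cdot|y_n)}(\theta\le\theta_n)$, which is the claimed limit.

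I do not anticipate a genuine obstacle here; the argument is routine. The only point requiring a modicum of care is the interchange of ``$\le$'' versus ``$<$'' in the rank definition versus the CDF $F_{k,y}$ and $F_y^*$, which is handled by the continuity assumption on the $q_k(\cdot|y)$ (hence also on the mixture), making the ranks consistent estimators of either version of the CDF. A secondary bookkeeping point is that the almost-sure statement is with respect to the randomness in the posterior draws $\{\tilde\theta_{kns}\}$ conditional on $(\theta_n,y_n)$; since the claim is stated pointwise in $n$ this is exactly what is wanted, and the finiteness of $K$ is what lets us pass from per-$k$ a.s.\ convergence to a.s.\ convergence of the weighted sum without any uniformity argument.
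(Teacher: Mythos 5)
Your proposal is correct and follows essentially the same route as the paper: Clause (I) by linearity of the integral over the finite mixture, and Clause (II) by applying the strong law of large numbers to each $r_{kn}$ separately, combining the finitely many almost-sure events, and invoking Clause (I) at $\mu=\theta_n$. Your explicit handling of the strict-versus-nonstrict inequality via continuity of the $q_k(\cdot|y)$ is a small point the paper leaves implicit, but nothing of substance differs.
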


\begin{proof}
Clause I is due to the integral linearity.  The CDFs are integrals on a fixed half interval:
$$F_{k,y}(\mu)= \Pr_{\theta \sim q_k(\theta|y)}(\theta < \mu) = \int_{-\infty}^\mu q_k(\theta|y) d\theta, 
$$
Likewise, 
\begin{align*}
F_{y}^*(\mu)
&= \int_{-\infty}^\mu q_k(\theta|y) q^{\mix}_{\w}(\theta | y) d\theta \\   
&=\int_{-\infty}^\mu \left(q_k(\theta|y) \sum_{k=1}^K w_k q_k (\theta | y) \right)d\theta \\    
&= \sum_{k=1}^K w_k \left(\int_{-\infty}^\mu q_k(\theta|y)  d\theta\right) \\  &= \sum_{k=1}^K w_k F_{k,y}(\mu),
\end{align*}
which proves Clause 1.
 
Clause 2 states the convergence of the mixed ranks.
For any fixed $k$ and $n$, 
because $\tilde \theta_{kns}$ are IID draws from $q_k(\theta|y_n)$, the strong law of large numbers applies: 
$$r_{kn}= \frac{1}{S}\sum_{s=1}^S \mathds{1} \{\theta_{n} \geq  \tilde \theta_{kns}\} \to \Pr_{\theta\sim q_k(\theta|y_n) }(\theta\leq \theta_n), ~~\mathrm{almost~~surely.}
$$
By an elementary probability lemma of the sum of almost surely convergent (Lemma \ref{lamma_sum}), the mixed rank converges almost surely as well, 
\begin{align*}
r^{\mix}_{n} = \sum_{k=1}^K w_k r_{kn} &\to \sum_{k=1}^K w_k  \Pr_{\theta\sim q_k(\theta|y_n) }(\theta\leq \theta_n),
~~\mathrm{almost~~surely,}\\
&=\Pr_{\theta \sim q^{\mix}_{\w}(\theta | y_n) }( \theta \leq \theta_{n} ),
\end{align*}
where the last equality is from Equation \eqref{eq_linear_rank_po}.   
\end{proof}

\begin{lemma}\label{lamma_sum}
Let $\{X_n\}$ and $\{Y_n\}$ be two sequences of random variables. If $X_n$ converges almost surely to $X$ and $Y_n$ converges almost surely to $Y$, then the sum $X_n + Y_n$ converges almost surely to $X + Y$.
\end{lemma}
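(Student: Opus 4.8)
The statement to prove is Lemma~\ref{lamma_sum}: if $X_n \to X$ almost surely and $Y_n \to Y$ almost surely, then $X_n + Y_n \to X + Y$ almost surely. The plan is to argue directly from the definition of almost sure convergence via the complements of the relevant null sets.

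First I would introduce the exceptional sets. Let $A = \{\omega : X_n(\omega) \not\to X(\omega)\}$ and $B = \{\omega : Y_n(\omega) \not\to Y(\omega)\}$. By hypothesis $\Pr(A) = 0$ and $\Pr(B) = 0$, so by countable (here just finite) subadditivity $\Pr(A \cup B) = 0$. Hence the complement $C = (A \cup B)^c$ has $\Pr(C) = 1$, and on $C$ we have simultaneously $X_n(\omega) \to X(\omega)$ and $Y_n(\omega) \to Y(\omega)$ as deterministic real sequences.

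Next I would invoke the elementary fact about convergence of real sequences: if $a_n \to a$ and $b_n \to b$ in $\mathbb{R}$, then $a_n + b_n \to a + b$ (an $\varepsilon/2$ argument, or simply continuity of addition). Applying this pointwise for each fixed $\omega \in C$ gives $X_n(\omega) + Y_n(\omega) \to X(\omega) + Y(\omega)$. Therefore $\{\omega : X_n(\omega) + Y_n(\omega) \to X(\omega) + Y(\omega)\} \supseteq C$, and since $\Pr(C) = 1$ this event has probability one, which is exactly the claim.

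There is essentially no obstacle here — the result is a standard measure-theoretic fact and the only thing to be careful about is ensuring the union of two null sets is null (trivial by subadditivity) and that addition of convergent real sequences is itself convergent (elementary analysis). If one wanted the statement for a countable sum rather than a pair, the same argument works verbatim using countable subadditivity, which is the form actually needed when it is applied to $r^{\mix}_n = \sum_{k=1}^K w_k r_{kn}$ by induction over the finite index $k$.
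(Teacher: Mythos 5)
Your proof is correct and complete: the union of the two null exceptional sets is null, and on its complement the claim reduces to the elementary fact that sums of convergent real sequences converge. The paper itself offers no proof of this lemma, treating it as an elementary fact, so your argument simply supplies the standard justification it implicitly relies on; your closing remark that the finite sum $r^{\mix}_n = \sum_{k=1}^K w_k r_{kn}$ is handled by induction over $k$ is exactly how the lemma is used in Proposition~\ref{thm:rank_linear}.
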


\subsection{Sample-based stacking for JS divergence (Prop.~\ref{thm:JS})}

In sample-stacking, we aggregate individual inferences in \eqref{eq:sample_stack_linear}. $\theta^*_{ns} = \w_0 + \mathbf{w}_1 \tilde   \theta_{1ns} + \dots + \mathbf{w}_K \tilde  \theta_{Kns}$ is the aggregated inference sample. 
For any given $\w$ and any $y$, we define the distribution of the sample-aggregated inference as follows. Let $\tilde \theta_k$ be a random variable draws from $q_k(\theta|y)$, these $\tilde \theta_k$ are mutually independent, and denote $q^*_{\w}(\theta|y)$ be law of the random variable $\w_0 + \mathbf{w}_1 \tilde \theta_{1} + \dots + \mathbf{w}_K \tilde \theta_{Kns}.$

From each simulation draw 
$$(\theta_n, y_n, \theta^*_{n1}, \theta^*_{n2}\dots, \theta^*_{nS}),$$
we generate $(S+1)$ classification examples with feature $\phi$ and label $z$:
$$ z=1, ~\phi=(\theta_n, y_n)$$
$$ z=0, ~\phi=(\theta^*_{n1}, y_n)$$
$$ \cdots$$
$$ z=0, ~\phi=(\theta^*_{nS}, y_n)$$

Let  $f(z=1|\phi) = \Pr(z=1|\phi, \mathrm{some~classifier})$ be any classification probability prediction that uses $\phi$ to predict $z$. Let $\mathcal{F}$  be the space of all such binary classifiers.
To balance the two classes, we typically use a weighted classification utility function 
$$U(z, \phi, f) = \frac{C}{S+1} \mathds{1} (z=1) \log f(z=1|\phi) + \frac{CS}{S+1}  \mathds{1} (t=0) \log f(z=0|\phi),$$
where $C=\frac{(S+1)^2}{2S}$ is a normalizing constant.

Sample stacking solves a mini-max optimization. 
\begin{equation} \label{eq_weighed_ad}
    \hat \w =  \arg\min_{\mathbf{w}}  \max_{f\in \mathcal{F}} \sum_{i=1}^{N(S+1)}U(z_i, \phi_i, f).    
\end{equation}

\begin{proposition}\label{thm:JS}
Clause (I). For any fixed $\w$ and fixed $S\geq 1$, as $N\to \infty$, the best classifier utility corresponds to the conditional  Jensen Shannon divergence between the aggregated inference and the truth,
$$
\max_{f\in \mathcal{F}} \frac{1}{N}\sum_{i=1}^{N(S+1)}U(z_i, \phi_i, f)
=
\E_{y}\left[ \frac{1}{2}p(\theta|y)
  \log \frac{p(\theta|y)}{r(\theta|y)} + \frac{1}{2}q^*_{\w}(\theta|y)
  \log \frac{p(\theta|y)}{r(\theta|y)} 
 \right] - \log2 +o_p(1),  
$$
where $$r(\theta|y)\coloneqq \frac{1}{2}\left(p(\theta|y)+q^*_{\w}(\theta|y)\right).$$

Clause (II). 
Let $\hat w$ be the solution from sample stacking \eqref{eq_weighed_ad}.  Assuming 
(a) there exists one true $\w_0$ such that $q^*_{\w}(\theta|y)= p(\theta|y),$ and (b) the sample model is locally identifiable at the truth, i.e., if $q^*_{\w^{\prime}}(\theta|y)= p(\theta|y)$, then  $\w^{\prime}=\w_0$. 
Then for any fixed $S$, as $N\to \infty$, the stacking weight estimate is consistent in probability,
$$\hat \w = \w_0 +o_p(1).$$
\end{proposition}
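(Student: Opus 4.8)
\textbf{Proof proposal for Proposition~\ref{thm:JS}.}
The plan is to treat this as a standard GAN-style argument combined with M-estimation, exactly paralleling the earlier propositions. For Clause~(I), I would first fix $\w$ and $S$ and compute the population version of the inner maximization. The $N(S+1)$ classification examples are IID draws from a mixture: with the class-balancing weights, a label-$1$ example has feature $(\theta,y)\sim p(\theta,y)$ and a label-$0$ example has feature $(\theta,y)$ where $y\sim p(y)$ and $\theta\sim q^*_{\w}(\cdot|y)$; the reweighting $C/(S+1)$ and $CS/(S+1)$ is chosen precisely so the two classes enter the population objective with equal weight $1/2$. By the weak law of large numbers, $\frac1N\sum_{i=1}^{N(S+1)}U(z_i,\phi_i,f)\to \E\,U(z,\phi,f)$ uniformly enough over $f\in\mathcal F$ (here $\mathcal F$ is all measurable classifiers, so the supremum is attained pointwise in $(\theta,y)$). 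Then for each fixed $\phi=(\theta,y)$, maximizing $\tfrac12\log f + \tfrac12\cdot\frac{q^*_{\w}(\theta|y)}{p(\theta|y)}\log(1-f)$-type integrand over $f\in[0,1]$ is the elementary Bayes-classifier calculation giving the optimal $f^*(z=1|\theta,y)=p(\theta|y)/\big(p(\theta|y)+q^*_{\w}(\theta|y)\big)$. Substituting $f^*$ back and writing $r=\tfrac12(p+q^*_{\w})$ yields the claimed expression, which is $\mathrm{JS}(p(\cdot|y),q^*_{\w}(\cdot|y))-\log 2$ averaged over $y\sim p(y)$; the $o_p(1)$ is the WLLN error. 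I would also remark that the value of $S$ only affects the normalization and the sampling variance, not the population target, which is why the statement holds for any fixed $S\geq 1$.

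For Clause~(II), I would invoke Clause~(I) together with the M-estimation lemma (Lemma~\ref{lemma:M}) already stated in the excerpt, in the same way Clause~(II) of Proposition~\ref{thm:KL} was handled. Define the population objective $M(\w):=-\big(\E_y\,\mathrm{JS}(p(\cdot|y),q^*_{\w}(\cdot|y))-\log2\big)$ and the sample objective $M_N(\w)$ as the negative of the inner-maximized empirical classification loss. Clause~(I) gives pointwise convergence $M_N(\w)\to M(\w)$; assuming (as in the other propositions) compactness of the $\w$-support this upgrades to the uniform convergence \eqref{eq:M_1}. Since Jensen--Shannon divergence is a genuine divergence, $M(\w)$ is uniquely maximized (value $0$) exactly when $q^*_{\w}(\cdot|y)=p(\cdot|y)$ for almost every $y$, and assumptions (a) and (b) — existence of a true $\w_0$ and local identifiability of the linear sample-aggregation form — guarantee this maximizer is the single point $\w_0$, i.e. the well-separated-maximum condition \eqref{eq:M_2} holds. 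Applying Lemma~\ref{lemma:M} to the estimator $\hat\w$ defined by the minimax problem \eqref{eq_weighed_ad} then delivers $\hat\w=\w_0+o_p(1)$.

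The main obstacle is making the interchange of limits in Clause~(I) rigorous: the minimax estimator \eqref{eq_weighed_ad} optimizes over the unrestricted classifier class $\mathcal F$, so uniform convergence of $\sup_{f\in\mathcal F}\frac1N\sum U$ to its population counterpart is not immediate from the plain WLLN and in general requires either restricting $\mathcal F$ to a parametric/Donsker class or arguing that the supremum is a pointwise-in-$(\theta,y)$ operation that commutes with the outer expectation. Consistent with the level of rigor in the companion propositions, I would handle this by noting that the optimal classifier has the explicit closed form $f^*$ above (depending only on the densities $p,q^*_{\w}$, not on the sample), so the inner maximum can be replaced by plugging in $f^*$ and the remaining object is again an ordinary IID average to which the WLLN applies directly; a fully careful treatment (uniform-in-$\w$ control, rate, and asymptotic normality) is deferred to the companion paper \citep{yao2024stacking}, as already flagged in Section~\ref{sec:theory}. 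A secondary, minor point to check is that the linear aggregation \eqref{eq:sample_stack_linear} produces a well-defined density $q^*_{\w}(\cdot|y)$ (e.g. the convolution structure when the $\mathbf{w}_k$ are invertible), so that the JS divergence and the identifiability assumption (b) are meaningful.
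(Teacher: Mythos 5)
Your proposal is correct and follows essentially the same route the paper takes: Clause~(I) via the standard adversarial-learning computation (the Bayes-optimal classifier $f^*=p/(p+q^*_{\w})$ substituted back to yield the conditional JS divergence, which the paper simply cites from Theorem~8 of \citet{yao2023discriminative}), and Clause~(II) via Lemma~\ref{lemma:M} exactly as in Proposition~\ref{thm:KL}. Your additional care about interchanging the supremum over $\mathcal F$ with the limit is a genuine refinement the paper glosses over, but it does not change the argument's structure.
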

Clause (I) is similar to the standard adversarial learning, \citep[e.g., Theorem 8 in][]{yao2023discriminative}. Clause (II) can be proved from Lemma \ref{lemma:M}. It is a direct consequence of the main proposition.

\subsection{Interval stacking (Prop.~\ref{thm:interval})}
In interval stacking, we run stacking to solve 
\begin{align*}
   \min_{\w}  & \sum_{i=1}^N U((r^*_n,l^*_n), \theta_n ), ~~~~
U((r^*_n,l^*_n), \theta_n ) \coloneqq 
(r^*_n - l^*_n)  \\ &
+ \frac{2}{\alpha} (l^*_n - \theta_n) \mathds{1} (\theta_n< l^*_n) 
+ \frac{2}{\alpha} (\theta_n - r^*_n) \mathds{1} (\theta_n> r^*_n).  
\end{align*}

\begin{proposition}\label{thm:interval}
Clause (I). For any fixed confidence-level $\alpha\in (0,1)$,  and for any $y$, let $r_{k, y}$ and $l_{k, y}$ be the $\frac{\alpha}{2}$ and $(1-\frac{\alpha}{2} )$ quantile of the distribution $q_k(\theta|y)$. Given a weight $\w$, $l^*_{\w, y} = \sum_{k=1}^K w_{k} l_{k, y}$ and $r^*_{\w, y} = \sum_{k=1}^K \w_{k+K} r_{k, y}$  are the stacked left and right confidence interval endpoint. For any fixed $\w$ and as $N\to \infty$,
\begin{equation}\label{eq_interval_pop}
  \frac{1}{N}\sum_{i=1}^N U((r^*_n,l^*_n), \theta_n ) - \E_{p(\theta,y)} U( r^*_{\w, y}, l^*_{\w, y},  \theta ) = o_p(1).  
\end{equation}

Clause (II). Assuming that $p(\theta|y)>0$ for $\theta \in \Theta$ and $y$ almost everywhere.  For any fixed confidence-level $\alpha\in (0,1)$, let $l_p(y), r_p(y)$ be the $\frac{\alpha}{2}$ and $(1-\frac{\alpha}{2} )$ quantile of the true posterior distribution $p(\theta|y)$, they are the unique minimize to the population limit in \eqref{eq_interval_pop}. That is, for any two  functions  $l(y), r(y)$ that maps $y$ to the $\Theta$ space,
$$\E_{p(\theta,y)} U( r_p(y), l_p(y),  \theta ) \leq 
\E_{p(\theta,y)} U( r(y), l(y),  \theta ), $$ 
and the equality holds if and only if 
$r_p(y)= r(y), l_p(y)=l(y)$, almost everywhere.
\end{proposition}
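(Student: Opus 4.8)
The plan is to handle the two clauses separately: Clause~(I) is a routine law-of-large-numbers statement in the spirit of Propositions~\ref{thm:consi} and~\ref{thm:KL}, while Clause~(II) is the substantive part and amounts to recognizing the interval score as a sum of two quantile (pinball) losses. For Clause~(I), I would observe that for each $n$ the summand $U((r^*_n,l^*_n),\theta_n)$ is a fixed measurable function of the i.i.d.\ tuple $(\theta_n,y_n,\{\tilde\theta_{kns}\}_{k,s})$, so the weak law of large numbers gives $\frac1N\sum_n U((r^*_n,l^*_n),\theta_n)\to\E\,U((r^*_1,l^*_1),\theta_1)$ in probability. The only remaining ingredient is to identify this limit with $\E_{p(\theta,y)}U(r^*_{\w,y},l^*_{\w,y},\theta)$: as $S\to\infty$ the empirical $\alpha/2$- and $(1-\alpha/2)$-quantiles of $\{\tilde\theta_{kns}\}_s$ converge a.s.\ to $l_{k,y}$ and $r_{k,y}$ by continuity of $q_k(\cdot|y)$, hence $l^*_n\to l^*_{\w,y}$ and $r^*_n\to r^*_{\w,y}$; since $U$ is continuous in its first two arguments, a dominated-convergence argument (using finite first moments of $p(\theta|y)$ and of the $q_k$, which makes the interval scores integrable) completes it. I would keep this part short.

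For Clause~(II) the key structural observation is that the interval score splits into a term depending on $l$ alone plus a term depending on $r$ alone,
\[
U(r,l,\theta)=g_r(r,\theta)+g_l(l,\theta),\qquad g_r(r,\theta)\coloneqq r+\frac{2}{\alpha}(\theta-r)\,\mathds{1}(\theta>r),\quad g_l(l,\theta)\coloneqq -l+\frac{2}{\alpha}(l-\theta)\,\mathds{1}(\theta<l).
\]
By the tower property, $\E_{p(\theta,y)}U(r(y),l(y),\theta)=\E_y\!\left[\E_{p(\theta|y)}g_r(r(y),\theta)+\E_{p(\theta|y)}g_l(l(y),\theta)\right]$, so minimizing over measurable functions $r(\cdot),l(\cdot)$ reduces to minimizing the two inner conditional expectations pointwise in $y$, over the scalars $r(y)$ and $l(y)$ separately. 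No measurable-selection subtlety arises: the candidate minimizers $r_p(\cdot),l_p(\cdot)$ are posterior quantile functions, hence measurable in $y$, so it suffices to show a unique scalar minimizer for $p(y)$-a.e.\ $y$ and then integrate the pointwise inequality.

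It then remains to analyze the two one-dimensional convex problems. Each of $g_r(\cdot,\theta)$ and $g_l(\cdot,\theta)$ is convex (an affine term plus a positive multiple of a hinge), so the conditional expectations are convex in $r$ and $l$ respectively; a one-line differentiation under the integral gives the first-order conditions $F_p(r|y)=1-\alpha/2$ and $F_p(l|y)=\alpha/2$, where $F_p(\cdot|y)$ is the true posterior CDF, so the minimizers are exactly $r=r_p(y)$ and $l=l_p(y)$. The hypothesis $p(\theta|y)>0$ on $\Theta$ makes $F_p(\cdot|y)$ strictly increasing (equivalently, the second derivatives $\frac{2}{\alpha}p(\cdot|y)$ are strictly positive), so each conditional objective is \emph{strictly} convex with a unique minimizer; consequently equality $\E_{p(\theta,y)}U(r_p,l_p,\theta)=\E_{p(\theta,y)}U(r,l,\theta)$ forces $r(y)=r_p(y)$ and $l(y)=l_p(y)$ for $p(y)$-almost every $y$, which is Clause~(II). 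This recovers the classical consistency of the interval score for the $(\alpha/2,1-\alpha/2)$-quantile pair \citep{gneiting2007strictly}. The one place needing care --- the main obstacle --- is the uniqueness half of the equivalence: it rests entirely on the positivity assumption, which I would use both to get strict monotonicity of $F_p(\cdot|y)$ and to license differentiating under the integral; without it the quantiles can be non-unique and the ``if and only if'' fails. Integrability of $U$ under $p(\theta,y)$ is the one mild regularity condition I would carry throughout to keep all expectations finite.
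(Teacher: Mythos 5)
Your proof is correct and follows essentially the same route as the paper's: Clause (I) by the weak law of large numbers, and Clause (II) by splitting the interval score into two pinball-type quantile losses, reducing to a pointwise-in-$y$ problem via the tower property, and invoking the strict properness of the quantile score under the positive-density assumption. The only cosmetic difference is that you verify the one-dimensional properness via convexity and first-order conditions ($F_p(r|y)=1-\alpha/2$, $F_p(l|y)=\alpha/2$), whereas the paper adapts the direct inequality computation from Theorem 6 of \citet{gneiting2007strictly}; your handling of Clause (I) is in fact slightly more careful in identifying the empirical interval endpoints with the population quantiles.
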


Clause I is from the weak law of large numbers. To prove Clause II, we first state the following Lemma \citep[adapted from Theorem 6 in][]{gneiting2007strictly}, which addresses a single distribution (no dependence on $y$).

\begin{lemma}
Let $p$ be a continuous distribution on $\Theta=\R$, and $p(\theta)>0$ for any $\theta$.
    If $s$ is a strictly increasing function, and 
    given a fixed confidence-level $\alpha\in (0,1)$, 
    then the scoring rule
$$S(r; \theta) = \alpha s(r) + (s(\theta) - s(r))\mathds{1}(\theta \leq r).$$
is proper for predicting the quantile of $p$ at level $\alpha$.    
\end{lemma}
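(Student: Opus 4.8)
The plan is to reduce this quantile scoring rule to the well-studied generalized pinball (``check'') loss, whose consistency is standard, via a purely algebraic identity. Writing $F$ for the CDF of $p$ and defining the generalized pinball loss $L(r,\theta) \coloneqq (\mathds{1}(\theta \le r) - \alpha)(s(r) - s(\theta))$, I would first verify, by checking the two cases $\theta \le r$ and $\theta > r$ separately, the pointwise identity $S(r;\theta) = \alpha\, s(\theta) - L(r,\theta)$. Since the term $\alpha\, s(\theta)$ does not depend on the report $r$, maximizing the expected score $\E_{\theta \sim p} S(r;\theta)$ over $r$ is exactly equivalent to minimizing the expected loss $h(r) \coloneqq \E_{\theta\sim p} L(r,\theta)$. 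This converts the properness claim into the statement that the true $\alpha$-quantile $r^*$, i.e.\ the unique solution of $F(r^*)=\alpha$ (unique because the assumption $p>0$ makes $F$ strictly increasing), minimizes $h$.

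The core step is to establish the difference formula
\[
h(r) - h(r^*) = \int_{r^*}^{r} \bigl(F(t) - \alpha\bigr)\, ds(t),
\]
where $ds$ is the Lebesgue--Stieltjes measure induced by the nondecreasing $s$, with the convention that integrating over a reversed interval flips the sign. Granting this, the conclusion is immediate: for $r > r^*$ the integration range lies to the right of $r^*$, where $F(t) \ge F(r^*) = \alpha$ and $ds \ge 0$, so the integrand is nonnegative and $h(r) \ge h(r^*)$; for $r < r^*$ one has $F(t) \le \alpha$ throughout the range, and the sign flip again yields $h(r) \ge h(r^*)$. Strictness --- hence uniqueness of the minimizer --- follows because $p>0$ forces $F(t) - \alpha$ to be strictly of one sign off $r^*$, while strict monotonicity of $s$ guarantees $ds$ assigns positive mass to every nondegenerate interval, so the integral is strictly positive whenever $r \neq r^*$. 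This shows $r^*$ uniquely maximizes the expected score, i.e.\ $S$ is proper for the level-$\alpha$ quantile.

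I expect the main obstacle to be justifying the Stieltjes difference formula for a general strictly increasing $s$ that need not be differentiable. When $s$ is smooth one simply differentiates under the integral sign to get $h'(r) = s'(r)\bigl(F(r)-\alpha\bigr)$ and integrates; for merely monotone $s$ this requires either a Lebesgue--Stieltjes integration-by-parts (Fubini) argument to interchange the $\theta$- and $r$-integrations, or a direct region-splitting computation of $h(r)-h(r^*)$ that partitions the realization space into $\{\theta \le r^*\}$, the band between $r^*$ and $r$, and $\{\theta > r\}$. The region-splitting route is elementary but bookkeeping-heavy, so I would present the Stieltjes form and invoke the standard consistency of the pinball loss (adapting Theorem 6 of Gneiting and Raftery) rather than reprove it from scratch, noting only that the strict positivity assumption $p>0$ is what upgrades consistency to a \emph{unique} optimizer.
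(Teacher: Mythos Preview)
Your proposal is correct, but the route differs from the paper's. You first rewrite $S$ via the generalized pinball loss $L(r,\theta)=(\mathds 1(\theta\le r)-\alpha)(s(r)-s(\theta))$, then reduce the properness claim to the Stieltjes difference formula $h(r)-h(r^*)=\int_{r^*}^{r}(F(t)-\alpha)\,ds(t)$ and read off the sign. The paper instead works directly with the expected score: for $r<\mu$ (the true $\alpha$-quantile) it computes
\[
\E S(\mu;\theta)-\E S(r;\theta)=\int_r^\mu s(\theta)p(\theta)\,d\theta + s(r)\bigl(F(r)-\alpha\bigr),
\]
then bounds the integral below by $s(r)\bigl(F(\mu)-F(r)\bigr)$ using only that $s(\theta)>s(r)$ on $(r,\mu)$ and $p>0$, which collapses the whole expression to zero; the case $r>\mu$ is symmetric. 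Your approach buys generality (the Stieltjes form cleanly handles non-differentiable $s$) and makes the link to the classical pinball/check loss explicit, at the cost of invoking Stieltjes integration by parts or Fubini. The paper's argument is a one-line elementary inequality requiring no measure-theoretic machinery, but it implicitly treats $s$ as if the integral $\int_r^\mu s(\theta)p(\theta)\,d\theta$ makes sense without further comment. Either is fine here; if you want the shortest self-contained proof, the paper's direct bound is hard to beat.
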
 
\begin{proof}
The proof of the lemma is also adapted from \citet{gneiting2007strictly}.
    Let $\mu$ be the unique $\alpha$-quantile of $p$. For any $r < \mu$, 
    \begin{align*}
       \E_{\theta\sim p(\theta)} S(\mu; \theta) -  \E_{\theta\sim p(\theta)}  S(r; \theta)
       &= \int_{r}^{\mu} s(\theta)p(\theta) d\theta + s(r)P(r) - \alpha s(r) \\
       &>  s(r)(p(\mu) - p(r)) + s(r)p(r) -\alpha s(r)\\
        &= 0.
    \end{align*}
    Likewise,  for any $r < \mu$,   $\E_{\theta\sim p(\theta)} S(q; \theta) -  \E_{\theta\sim p(\theta)}  S(r; \theta) >0.$
\end{proof}

Let $s(x)=x$  and apply this lemma  twice, then for any distribution $p$ on $\Theta=\R$, and a fixed confidence level $\alpha$, suppose $\mu_l, \mu_r$ are the $\alpha/2$ and $1-\alpha/2$  quantile of $p$, 
 it is clear that 
 $$\E_{\theta\sim p(\theta)} U(\mu_r, \mu_l \theta) -  \E_{\theta\sim p(\theta)}  S(\mu_r^\prime, \mu_l^\prime; \theta) \leq  0.
 $$
 The equality holds if and only $\mu_r^\prime=\mu_r, \mu_l^\prime=\mu_l$.

To prove the second clause of Prop.~\ref{thm:interval},  note that 
$\E_{p(\theta,y)} U( r_p(y), l_p(y),  \theta ) -
\E_{p(\theta,y)} U( r(y), l(y),  \theta )
=\E_{y}   \left(\E_{p(\theta|y)} U( r_p(y), l_p(y),  \theta )- 
\E_{p(\theta|y)} U( r(y), l(y),  \theta )\right).$
Because for any given $y$,  $\E_{p(\theta|y)} U( r_p(y), l_p(y),  \theta )- 
\E_{p(\theta|y)} U( r(y), l(y),  \theta ) \leq 0$ 
due to the previous lemma, then 
$
\E_{p(\theta,y)} U( r_p(y), l_p(y),  \theta ) -
\E_{p(\theta,y)} U( r(y), l(y),  \theta )\leq 0 $, and the equality holds if any only if $r_p(y)=  r(y), l_p(y)= l(y)$ almost everywhere with respect to $p(y)$.

\subsection{Moment stacking (Prop.~\ref{thm:moment})}
\begin{proposition}\label{thm:moment}
For any $y$, let $\mu_k(y)$ and $V_k(y)$ be the mean and covariance of the $k$-th approximate posterior  $q_k(\theta|y)$.
Given a weight $\w$, let  $V^*_y(\w)$
and $\mu^*_y(\w)$ be the covariance and mean in the $\w$-mixed posterior $\sum_{k=1}^K q_k(\theta|y)$, as defined in subsection~\ref{sec:Moment}. Then for any fixed $\w$ and $N \to \infty$, 
$$\frac{1}{N}\sum_{n=1}^N \left(\log \mathrm{det} V^*_n(\w) + ||\mu^*_n(\w) - \theta_n||^2_{(V^*_n(\w))^{-1}}\right) = 
\E_{p(\theta, y)}\left(
\log \mathrm{det} V^*_y(\w) + ||\mu^*_y(\w) - \theta||^2_{(V^*_y(\w))^{-1}}\right) + o_p(1). 
$$
Let $\mu(y)$ and $V(y)$ be the true mean and variance of the posterior $p(\theta|y)$, then for any $\w$,
$$
\E_{p(\theta, y)}\left(
\log \mathrm{det} V(y) + ||\mu(y) - \theta||^2_{V^{-1}(y)}\right) 
\leq 
\E_{p(\theta, y)}\left(
\log \mathrm{det} V^*_y(\w) + ||\mu^*_y(\w) - \theta||^2_{(V^*_y(\w))^{-1}}\right), 
$$
and the equality holds if and only if 
$V^*_y(\w)= V(y) $ and $V^*_y(\w)=  \mu(y)$ almost everywhere with respect to p(y), if attainable.
\end{proposition}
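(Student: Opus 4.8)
The plan is to prove the two clauses separately. Clause~(I) is a weak law of large numbers. Clause~(II) amounts to showing that $(\mu,V)\mapsto \log\mathrm{det}\,V+\|\mu-\theta\|^2_{V^{-1}}$ is (the negative of) a strictly proper scoring rule for the pair (mean, covariance): it equals $-2\log$ of the Gaussian density $\mathcal N(\mu,V)$ at $\theta$ up to an additive constant, so the minimizer of its posterior expectation is the moment-matching distribution --- a classical fact which I would nonetheless establish by hand, in two steps (optimize over $\mu$ for fixed $V$, then over $V$).

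For Clause~(I): since $\{(\theta_n,y_n)\}$ are IID from $p(\theta,y)$ and $\w$ is fixed, each summand $\log\mathrm{det}\,V^*_n(\w)+\|\mu^*_n(\w)-\theta_n\|^2_{(V^*_n(\w))^{-1}}$ is a fixed measurable function of $(\theta_n,y_n)$ (through $V^*_{y_n}(\w)$ and $\mu^*_{y_n}(\w)$), hence the summands are IID and the claim follows from the weak law of large numbers under the integrability hypotheses $\E_p|\log\mathrm{det}\,V^*_y(\w)|<\infty$ and $\E_p\|\mu^*_y(\w)-\theta\|^2_{(V^*_y(\w))^{-1}}<\infty$. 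I would record as a standing nondegeneracy assumption that $V^*_y(\w)\succ 0$ for $p$-a.e.\ $y$ (which holds if, e.g., at least one component covariance is positive definite), since it is needed even for the objective to be finite; together with finite posterior second moments it delivers the integrability above. The finite-$S$ version of the objective, built from the sample moments $\mu_{kn},V_{kn}$, reduces to this one since $\mu_{kn}\to\mu_k(y_n)$ and $V_{kn}\to V_k(y_n)$ a.s.\ as $S\to\infty$ by the strong law (cf.\ Lemma~\ref{lamma_sum}).

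For Clause~(II): by the tower property it suffices to minimize, for $p$-a.e.\ $y$, the conditional objective $g_y(\mu,V):=\log\mathrm{det}\,V+\E_{p(\theta|y)}\|\theta-\mu\|^2_{V^{-1}}$ over all $\mu$ and all $V\succ 0$. Fixing $V$ and expanding $\|\theta-\mu\|^2_{V^{-1}}$ around $m_y:=\E[\theta|y]$, the linear term has zero mean under $p(\theta|y)$, so $\E_{p(\theta|y)}\|\theta-\mu\|^2_{V^{-1}}=\mathrm{tr}(V^{-1}\Sigma_y)+\|m_y-\mu\|^2_{V^{-1}}$ with $\Sigma_y:=\mathrm{Var}(\theta|y)$, uniquely minimized at $\mu=m_y$ since $V^{-1}\succ 0$. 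Substituting $\mu=m_y$ leaves $h_y(V):=\log\mathrm{det}\,V+\mathrm{tr}(V^{-1}\Sigma_y)$; assuming $\Sigma_y\succ 0$, write $V=\Sigma_y^{1/2}A\Sigma_y^{1/2}$ with $A\succ 0$ having eigenvalues $\lambda_1,\dots,\lambda_d>0$, so $h_y(V)=\log\mathrm{det}\,\Sigma_y+\sum_{i=1}^d(\log\lambda_i+\lambda_i^{-1})$, and since $t\mapsto\log t+t^{-1}$ is strictly decreasing on $(0,1]$ and strictly increasing on $[1,\infty)$ with value $1$ at $t=1$, we get $h_y(V)\ge\log\mathrm{det}\,\Sigma_y+d$ with equality iff every $\lambda_i=1$, i.e.\ $V=\Sigma_y$. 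Hence $g_y$ is uniquely minimized at $(\mu(y),V(y))=(m_y,\Sigma_y)$; taking $\E_y$ and invoking the pointwise minimality gives the stated inequality, with equality iff $\mu^*_y(\w)=\mu(y)$ and $V^*_y(\w)=V(y)$ for $p$-a.e.\ $y$. The ``if attainable'' proviso covers the case where no $\w$ realizes these oracle conditional moments, in which case the inequality is strict but still informative.

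I expect the only delicate points to be the matrix optimization $\min_{V\succ 0}\{\log\mathrm{det}\,V+\mathrm{tr}(V^{-1}\Sigma_y)\}$ with its strict uniqueness --- dispatched by the eigenvalue reduction above, with attention to the degenerate case $\Sigma_y$ singular, which is exactly why the statement carries the ``if attainable'' qualifier --- and the routine step of promoting the pointwise ``if and only if'' to an almost-everywhere statement via conditioning, which needs the nondegeneracy and integrability assumptions so that every conditional expectation stays finite. Neither is a genuine obstacle.
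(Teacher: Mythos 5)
Your proof is correct and follows essentially the same route the paper intends: Clause (I) by the weak law of large numbers, Clause (II) by verifying that the moment objective is a (strictly) proper score and then conditioning on $y$, exactly as in the interval-stacking proposition. The paper omits the propriety verification, which you supply explicitly via the Gaussian log-density identity, the decomposition $\E\|\theta-\mu\|^2_{V^{-1}}=\mathrm{tr}(V^{-1}\Sigma_y)+\|m_y-\mu\|^2_{V^{-1}}$, and the eigenvalue reduction of $\min_{V\succ 0}\{\log\mathrm{det}\,V+\mathrm{tr}(V^{-1}\Sigma_y)\}$; your attention to the integrability and nondegeneracy caveats is a welcome addition.
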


The proof is very similar to Prop.~\ref{thm:interval}, requiring one application of the WLLN, and to verify that the underlying score is proper. We omit the details here.

The next proposition states that any mixture of a list of pointwisely under-confident approximations will remain under-confident.  
\begin{proposition}\label{thm:under}
 For a fixed $y$, if for any $k$, $$\mathrm{Var}_{q_k}(\theta|y) \geq  \mathrm{Var}_{p}(\theta|y),$$
 then for any weight $\w \in \s_K$, the variance in the mixture is always under-confident:
$k$, $$\mathrm{Var}_{p_{\w}^{\mix}}(\theta|y) \geq  \mathrm{Var}_{p}(\theta|y).$$
 \end{proposition}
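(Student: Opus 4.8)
The plan is to read the hypothesis and conclusion in the Loewner (positive-semidefinite) order, the scalar case being a special instance, and to exploit exactly the law-of-total-variance identity already used in Section~\ref{sec:Moment}. Fix $y$ and abbreviate $V_k \coloneqq \mathrm{Var}_{q_k}(\theta\vert y)$, $\mu_k \coloneqq \E_{q_k}(\theta\vert y)$, $V \coloneqq \mathrm{Var}_p(\theta\vert y)$, and $\bar\mu \coloneqq \sum_{k=1}^K w_k \mu_k$. I would first represent the mixture hierarchically: introduce a categorical latent variable $Z$ with $\Pr(Z=k)=w_k$ and let $\theta \mid Z=k \sim q_k(\cdot\vert y)$, so that the marginal law of $\theta$ is precisely $p^{\mix}_{\w}(\cdot\vert y)$.

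Next I would apply the conditional variance decomposition $\mathrm{Var}(\theta) = \E[\mathrm{Var}(\theta\mid Z)] + \mathrm{Var}(\E[\theta\mid Z])$, which gives
\[
\mathrm{Var}_{p^{\mix}_{\w}}(\theta\vert y) \;=\; \sum_{k=1}^K w_k V_k \;+\; \sum_{k=1}^K w_k (\mu_k - \bar\mu)(\mu_k-\bar\mu)^{\top}.
\]
The second, ``between-component'' term is a nonnegative combination of rank-one positive-semidefinite matrices, hence positive semidefinite, so $\mathrm{Var}_{p^{\mix}_{\w}}(\theta\vert y) \succeq \sum_{k=1}^K w_k V_k$. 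Then, since $V_k \succeq V$ for every $k$ and the weights form a convex combination ($w_k\ge 0$, $\sum_k w_k = 1$), and the Loewner order is preserved under convex combinations, $\sum_{k=1}^K w_k V_k \succeq \sum_{k=1}^K w_k V = V$. Chaining the two inequalities yields $\mathrm{Var}_{p^{\mix}_{\w}}(\theta\vert y) \succeq \mathrm{Var}_p(\theta\vert y)$. The coordinatewise/scalar reading of the statement follows by taking diagonal entries, i.e.\ evaluating $u^{\top}(\cdot)\,u$ at each basis vector $u$, for which the between-component contribution $\sum_k w_k (u^{\top}(\mu_k-\bar\mu))^2$ is again nonnegative.

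I do not expect a real obstacle here: the argument is a chain of elementary facts about positive-semidefinite matrices once the decomposition is in place. The only point that needs care is the interpretation of ``$\geq$'' for covariance matrices — it must be the Loewner order for the statement to be both meaningful and correct, since the between-component term, while PSD, is not entrywise nonnegative in general; under the Loewner reading (and, separately, the per-coordinate scalar reading) every step goes through. For $\Theta=\R$ the whole proof collapses to $\mathrm{Var}_{p^{\mix}_{\w}} = \sum_k w_k V_k + \sum_k w_k(\mu_k-\bar\mu)^2 \ge \sum_k w_k V_k \ge \sum_k w_k V = V$.
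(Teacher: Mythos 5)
Your proposal is correct and follows essentially the same route as the paper: the law of total variance for the mixture (conditioning on the latent component index), dropping the nonnegative between-component term, and then bounding the weighted average of the component variances below by $\mathrm{Var}_p(\theta\vert y)$ (the paper passes through $\min_k \mathrm{Var}_{q_k}$, you use the convex combination directly — an immaterial difference). Your explicit treatment of the multivariate case via the Loewner order is a slightly more careful reading than the paper's scalar-style presentation, but the argument is the same.
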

 \begin{proof}
     Use the law of total variance, for any fixed $\w$,
\begin{align*}
    \mathrm{Var}_{p_{\w}^{\mix}}(\theta|y) &=  \mathrm{Var} \E_{q_k}(\theta|y) + \E \mathrm{Var}_{q_k} (\theta|y)\\
    &\geq   \E \mathrm{Var}_{q_k} (\theta|y)\\
     &=   \sum_{k=1}^K w_k \mathrm{Var}_{q_k} (\theta|y)\\
     &\geq \min_{k} \mathrm{Var}_{q_k} (\theta|y)\\
&\geq   \mathrm{Var}_{p} (\theta|y).\\
\end{align*}
 \end{proof}

\section{\NoCaseChange{Practical Implementation}}\label{app:practical}

\subsection{Smooth approximation of indicator functions}\label{app:smooth}

We approximate indicators functions using an infinitely differentiable approximation of the Heaviside step function ${H_\varepsilon(x) = (1 + \exp(-2\varepsilon x))^{-1}}$ for a given $\varepsilon$ value. In practice, we select an $\varepsilon$ value that is sufficiently small compared to typical evaluation points of $H_\varepsilon$.

In particular, in the context of Sect.~\ref{sec:cali}, we choose $\varepsilon = 1/100$. In the context of Sect.~\ref{sec:interval}, we choose $\varepsilon = (\min_n (r_n - l_n) / 1,000$ where $\min_n (r_n - l_n)$ is the minimum interval length over the training set.

\subsection{Quasi Monte Carlo sampling from the stacked posterior}\label{app:sample}
Given weights ${w_k}$, and $K$ simulation draws
$\{\theta_{ks}\} \sim q_{k}(\cdot)$, $k=1,\cdot, K$, the goal is  to draw samples from the stacked inference $\sum_{k=1}^K w_k q_k(\cdot)$. 
We first draw a fixed-sized $S^\star_k = \lfloor Sw_k\rfloor$ sample randomly without replacement from the $k$-th inference, 
and then sample the remaining $S -\sum_{k=1}^KS^\star_k$ samples without replacement with the probability proportional to $w_k - S^\star_k/S$ from inference $k$.

\subsection{Closed-form expression for the rank-based  integral in Eq.~\eqref{eq:rank_obj}}


In the context of Sect.~\ref{sec:cali}, we consider $N$ i.i.d. rank samples $r_1, \dots, r_N$ and their corresponding empirical CDF $\hat{F}_{r, N}(t) = \frac{1}{N}\sum_{n=1}^N \mathds 1 (r_n \leq t)$. We derive a closed-form expression of a Cramér–von Mises-type distance between $\hat{F}_{r, N}$ and the CDF of a uniform distribution $F_{\mathcal{U}(0, 1)}(t) = t 1_{t\in [0, 1]}$:
\begin{align*}
\int_0^1|\hat{F}_{r, N}(t) - t|^2\,dt &=  \int_0^1\hat{F}_{r, N}(t)^2\,dt - 2\int_0^1 t\hat{F}_{r, N}(t) \,dt + \frac{1}{3}, \\
&= \frac1{N^2}\sum_{i, j = 1}^N \int_0^1 \mathds 1 (r_i \leq t)\mathds 1 (r_j \leq t) \,dt - \frac{2}{N}\sum_{i=1}^N \int_0^1 t\,\mathds 1 (r_i \leq t) \,dt + \frac{1}{3}, \\
&= \frac1{N^2}\sum_{i, j = 1}^N (1 - \max(r_i, r_j)) - \frac{1}{N}\sum_{i=1}^N (1 - r_i^2) + \frac{1}{3}, \\
&= \frac{1}{N}\sum_{i=1}^N r_i^2 - \frac1{N^2}\sum_{i, j = 1}^N\max(r_i, r_j) + \frac{1}{3}.
\end{align*}

\section{\NoCaseChange{Experiment Details}}\label{app:exp}
\subsection{Toy example: hybrid stacking in  Gaussian posteriors}

In Section \ref{sec:theory}, we design a true posterior inference  
$\theta|y \sim \n(y, 1)$. We consider four manually corrupted posterior inferences,\\
\hspace*{1.65em} Inference 1: ~$\theta|y \sim \n(y+1, 1)$,\\
\hspace*{2em}Inference 2: ~$\theta|y \sim \n(y-1, 1)$,\\
\hspace*{2em}Inference 3: ~$\theta|y \sim \n(y, 0.56)$,\\
\hspace*{2em}Inference 4: ~$\theta|y \sim \n(y+0.5, 2.45)$.\\
They are designed in such a way that the KL divergence between the true posterior and each of the four approximate inferences is roughly the same.

In hybrid stacking, we maximize the sum of log density and rank-calibration:
$$\max_{\w\in\s_K}\left(
{\sum_{n=1}^N \log  \sum_{k=1}^K w_k q_k(\theta_n| y_n)  }
-\lambda\left(
{  \left(\frac{1}{N}\sum_{n=1}^N \sum_{k=1}^K {\log(w_k r_{kn})} +1\right)^2}  + \left(\frac{1}{N}\sum_{n=1}^N \sum_{k=1}^K {(w_k r_{kn})} - \frac{1}{2}\right)^2\right)\right).
$$

We use $\lambda=100$, because the calibration error is of a smaller scale. 
Though it is straightforward to further tune $\lambda$
via standard cross-validation, we keep a default value without tuning.



\subsection{Hyperparameters in SBI Benchmark and SimBIG}
In Section~\ref{sec:exp}, we have demonstrated our stacking on three models from the SBI Benchmark: ``Two Moons'', ``simple likelihood and complex posterior'', and SIR model. In addition, we applied our method to a cosmological inference task ``SimBIG''.

For each inference task, we run a large number of neural posterior inferences, constructed by varying the hyperparameters in the normalizing flow on a grid. Table~\ref{table:sweep_conf} summarizes the range of hyperparameters we used.

\begin{table*}
\centering
    \def\arraystretch{1.5}
    \begin{tabular}{c|cc|cc|c}
        \multicolumn{1}{c|}{\multirow{2}{*}{Hyperparameter}} & \multicolumn{2}{c|}{\multirow{1}{*}{Minimum value}} & \multicolumn{2}{c|}{\multirow{1}{*}{Maximum value}} & \multicolumn{1}{c}{\multirow{2}{*}{Distribution}}\\
        \cline{2-5}
        & \texttt{sbibm} & SimBIG & \texttt{sbibm} & SimBIG & \\
        \hline
        Nb. of MAF layers & 3 & 5 & 8 & 11 & uniform \\
        Nb. of MLP hidden units & 32 & 256 & 256 & 1024 & log-uniform \\
        Nb. of MLP layers & 2 & 2 & 4 & 4 & uniform \\
        MLP Dropout prob. & 0.0 & 0.1 & 0.2 & 0.2 & uniform \\
        Batch size & 20 & 20 & 100 & 100 & uniform \\
        Learning rate & 1e-5 & 5e-6 & 1e-3 & 5e-5 & log-uniform
    \end{tabular}
    \caption{Constraints for the random selection of the hyperparameters of the neural posterior estimators.}
    \label{table:sweep_conf}
\end{table*}

\subsection{Additional experiments using neural spline flows} 
In the main paper we have tested stacking on Masked Autoregressive Flow.  Here we run stacking on five SBI benchmark tasks (two moons, SLCP, SIR, Gaussian mixture and Lotka-Volterra). In each task, we run  $K=50$  neural spline flows to approximate the Bayesian posterior, and run posterior stacking these flows. 
Table \ref{table2} demonstrates the gain of the mixture stacking and the moment stacking where we evaluate the log predictive density or the moment errors in each task. Our stacking approach outperforms uniform weighting and selection.

\begin{table}
    \centering
 \begin{tabular}{ c || c c c || c c c }
\multicolumn{1}{c||}{\multirow{3}{*}{Task}} & \multicolumn{3}{c||}{\multirow{2}{*}{\parbox{3.5cm}{\centering Mixture for KL [Log~Pred.~Density] $\uparrow$}}} & \multicolumn{3}{c}{\multirow{2}{*}{\parbox{3cm}{\centering Moments Stacking [Moments Error] $\downarrow$}}}\\
&&&&&& \\
\cline{2-7}
& Best & Unif. & Stacked  & Best & Unif. & Stacked\\
\hline
{ Two Moons} & \mes{3.57}{.01} & \mes{3.54}{.01} & \mes{\textbf{3.66}}{.01} & \mes{-1.72}{.02} & \mes{-1.72}{.02} & \mes{\textbf{-1.73}}{.02} \\
{ SLCP} & \mes{-4.44}{.03} & \mes{-4.38}{.02} & \mes{\textbf{-4.06}}{.03} & \mes{0.86}{.02} & \mes{0.99}{.01} & \mes{\textbf{0.75}}{.01} \\
{ SIR} & \mes{7.78}{.02} & \mes{7.77}{.02} & \mes{\textbf{7.88}}{.02} & \mes{-7.02}{.02} & \mes{-6.99}{.01} & \mes{\textbf{-8.49}}{.02} \\ 
{ Gauss. Mixture} & \mes{-1.24}{.02} & \mes{-1.16}{.02} & \mes{\textbf{-1.13}}{.02} & \mes{0.28}{.01} & \mes{0.30}{0.01} & \mes{\textbf{0.26}}{.01} \\
{ Lotka-Volterra} & \mes{12.94}{.02} & \mes{12.69}{.02} & \mes{\textbf{13.47}}{.02} & \mes{-6.58}{.03} & \mes{-5.52}{.01} & \mes{\textbf{-6.91}}{.01} 
\end{tabular}
\caption{We run stacking on five benchmark tasks, each  with the $K=50$ neural spline flows. The gray number indicates the standard error.  }\label{table2}
\end{table}

\end{document}